\colorlet{linkequation}{blue}
\newtheorem{proposition}{Proposition}[section]
\newtheorem{lemma}{Lemma}[section]
\newtheorem{thm}{Theorem}[section]
\newcommand{\bea}{\begin{eqnarray}}
\newcommand{\eea}{\end{eqnarray}}
\newcommand{\beq}{\begin{equation}}
\newcommand{\eeq}{\end{equation}}
\newcommand{\eps}{\epsilon}
\newcommand{\s}{\sigma}
\title{On an ordering-dependent generalization of Tutte polynomial }
\author{Joseph Ben Geloun\footnote{Max Planck Institute for Gravitational Physics, Albert Einstein Institute,
Am M\"{u}hlenberg 1, 14467 Golm, Germany} \footnote{Corresponding Author: jbengeloun@aei.mpg.de} ,
Francesco Caravelli\footnote{Invenia Labs, 27 Parkside Place, Parkside Cambridge CB1 1HQ and\\
London Institute of Mathematical Sciences, 35a South Street, London W1K 2XF, UK and\\
Department of Computer Science, University College London, Gower Street, London WC1E 6BT, UK }
}
\date{}
\begin{document}

\maketitle

\begin{abstract}
A generalization of Tutte polynomial involved in the evaluation of the moments of the integrated geometric Brownian in the Ito formalism is discussed. 
The new combinatorial invariant depends on the order in which 
the sequence of contraction-deletions have been performed on the graph.
 Thus, this work provides a motivation for studying an order-dependent Tutte polynomial in the context of stochastic differential equations. We show that in the limit of the control parameters encoding the ordering going to zero, the multivariate Tutte-Fortuin-Kasteleyn polynomial is recovered. 
\end{abstract}

\section{Introduction}
There is a deep connection between stochastic processes and quantum theory. As in quantum mechanics, an important role is played by the history of a process, in particular when the process is not ergodic. Moreover, if the noise of a certain stochastic process can be described by a Wiener process, then one can resort to many techniques developed for studying quantum field theories. This is true in particular for the case of the geometric Brownian motion.
The geometric Brownian motion (GBM) is the stochastic process described by the stochastic differential equation:
\begin{equation}
df=\mu f dt+\sigma f dW_t\,,
\label{eq:diff}
\end{equation}
where $W_t$ is a Wiener process and where $\mu, \sigma$ are constants, which parametrize the drift and the strength of the noise, respectively. The solution of equation (\ref{eq:diff}) can be written, formally, as
\begin{equation}
f(W_t,t)=\exp\left\{\left(\mu-\frac{\sigma^2}{2}\right)t+\sigma W_t\right\}.
\label{eq:moment}
\end{equation}
The Geometric Brownian motion is used for modelling many phenomena in a variety of contexts \cite{Gardiner,YorBook,Rev1,Rev2,Yor1}. 
This remains true for the case of the study of financial assets, where the distribution of returns can be approximated very often by a log-normal distribution \cite{YorBook,Gardiner}, at least in specific regimes. 
Yor provided various insights into the moments of the GBM, using as a starting point Girsanov theorem \cite{Girsanov}, 
and in particular in the case of the moments of its integral.

Recently, the properties of the moments of (\ref{eq:moment}) were studied by means of combinatorics in  \cite{Caravelli1}, obtaining a closed formula, although cumbersome, for generic moments of the functional. Moreover, these functionals were shown to be involved in evaluating the averages of the solution of a logistic, stochastic differential equations. These logistic differential equations appear in the problem of the optimal leverage when one considers the microstructure properties of financial markets \cite{Caravelli2}.

From the analysis provided in \cite{Caravelli1}, the central quantities of interest emerged to be the functions $s_k$, defined by $\langle f(W_t,t)^n\rangle=t^n \Gamma(n) s_n(\lambda_1,\cdots,\lambda_n)$, with $\lambda_i\equiv t(\mu+(k-i)\sigma^2)$. One of the main results of \cite{Caravelli1}, lays in the following proposition 
(Lemma 2 in \cite{Caravelli1}):

\

\noindent\textit{Proposition.} Under the condition of Ito's Wiener process, one has the following property for $s_k$:
\begin{equation}
s_k(\lambda_1,\cdots,\lambda_k)=\frac{e^{\lambda_k}}{\lambda_k}s_{k-1}(\lambda_1, \cdots, \lambda_{k-1})-\frac{1}{\lambda_k}s_{k-1}(\lambda_1, \cdots,\lambda_{k-2}, \lambda_{k-1}+\lambda_{k})\,. 
\label{eq:rec}
\end{equation}
with $s_0=1$. 

The recurrence relation \eqref{eq:rec} for the function $s_k$ can be solved to obtain a closed solution for the moments, and show that these functions can be written as a determinant of a linear operator \cite{Caravelli1}.
One interesting observation is that \eqref{eq:rec} is very suggestive of a deletion-contraction rule  performed on an abstract graph \cite{Tutte}. We are in the presence of a combinatorial object which can be described as a recursion relation of the form:
\begin{equation}
Z_{k}=\beta_{k} Z_{k-1}+\alpha_{k} Z_{k-1}' \,,
\end{equation}
where $Z_k$ and $Z_k'$ are functions, $k$ is related to the 
number of edges of some graph $G$, $\alpha_k$ and $\beta_k$ 
 coefficients associated with an edge of the graph being either contracted or deleted.  
In the present paper, we are interested in the mathematical properties of the relation \eqref{eq:rec} and explore in 
which sense one can actually consider this as the result
of an underlying generalized Tutte polynomial \cite{Tutte}.
This task presents several difficulties. As a first comment, we note that in order to describe the recurrence \eqref{eq:rec},
 one needs to formulate a combinatorial object similar to the multi-variate version of the 
Tutte polynomial \cite{Sokal}, and rather in the form of the 
Fortuin and Kasteleyn (FK) \cite{FK}. Introduced
in the study of random cluster models, the Tutte-FK polynomial on a graph $G(V,E)$, with edge weights $(p_e)_{e\in E}$, 
satisfies in fact a recurrence relation as
\bea\label{fk}
Z(G, (p_e)) = q_e\, Z(G-e, (p_{e' \ne e})) + p_e\, Z(G/e, (p_{e' \ne e}))  \,,
\eea   
where $p_e \in [0,1]$, $q_e$ is a ``dual'' weight chosen as $1-p_e$.
In full generality, $p_e$ and $q_e$ can be elements of
a commutative ring\footnote{The complete definition of
$Z(G)$ requires an extra parameter $\kappa$ that is not useful for our discussion.}.  We note that in \eqref{fk}, upon 
  contraction and deletion, $Z(G/e)$ and $Z(G-e)$ keep the form
of $Z(G)$ and become independent of the weights $p_e$ 
and $q_e$
of the edge $e$. This property is fundamental to 
ensure that the polynomial $Z(G)$ (as for any object falling in 
the universality class described by the Tutte polynomial) is
independent of the order in which one performs the sequence of contraction and deletion of the edges of $G$.  
From the right hand side of \eqref{eq:rec}, we note that the second function $s_{k-1}(\dots, \lambda_{k-1}+\lambda_k)$ still keeps its dependence of the variable 
$\lambda_k$. This suggests that the combinatorial
object that we seek is different from the usual Tutte-FK polynomial.  
If we want to extend the Tutte polynomial, it is therefore necessary to explicitly 
encode such a type of memory principle in the polynomial. 
As we will show, the memory encoded in the recursion relation in fact can be parametrized by the introduction of
an ordering of the edges $(e_{\sigma(1)}, e_{\sigma(2)}, \dots, e_{\sigma(n)})$, where $n=|E|$ and
$\sigma$ belongs to the symmetric group of $n$ elements.

Finally, we emphasize the fact that this work reports 
an extension of Tutte polynomial which depends on the order
of the contraction-deletion. Such a type of polynomial 
have not been the main focus of the attention
of combinatoricists, to the best of our knowledge. In \cite{BR}, Bollob\'{a}s and Riordan mention 
a possible extension of Tutte polynomial using noncommutative
ring elements which is contraction-deletion-ordering-dependent 
and perhaps worth to investigate in the future. In a different but still close perspective, 
this work precisely provides an instance in which this ordering for the Tutte
polynomial might be relevant in evaluating averages in exponentiated Wiener processes.   
We provide the generalized Tutte polynomial in Section \ref{sec:tutte}, followed by conclusions in Section \ref{sec:conc}. 
Appendix \ref{app:lemckl} provides the proof of a technical 
lemma and, to  illustrate the new graph invariant, we evaluate the
generalized Tutte polynomial on a specific graph in Appendix \ref{app:example}.

\section{The recurrence relation of the generalized Tutte}\label{sec:tutte}

The recurrence relation defined in \eqref{eq:rec} introduced in \cite{Caravelli1} can be thought as 
a modified contraction-deletion of the chain graph  (see Fig.\ref{graph1}). 
Our initial task is to generalize this recurrence relation to 
an arbitrary graph. Let us adopt the following considerations 
which clearly encompass the problem \eqref{eq:rec}: let $R$ be a commutative
ring and $T$ be a $R$-valued function
defined on a graph $G(V,E)$.  Let us assume that $T$   is defined on the set of edges weights 
$\{\lambda_e\}_{e \in E}$, $\lambda_{e}\in R$, and satisfies, for any edge $e$, 
\bea
T(G; \{\lambda_e\}) = \alpha(\lambda_e)\, T(G/e; \{\tilde\lambda_{\tilde{e} \ne e}\}) + \beta(\lambda_e)\,T(G-e; \{\lambda'_{e' \ne e}\}) \,, 
\eea
where $\tilde\lambda_{\tilde{e} \ne e} = \tilde\lambda_{\tilde{e} \ne e} (\{\lambda_e\})$ and $\lambda'_{e' \ne e}= \lambda'_{e' \ne e} (\{\lambda_e\})$ are new edge weights of $G/e$ and $G-e$, respectively, and are functions of the former edge weights of $G$; 
$\alpha$ and $\beta$ are $R$-valued functions.    

\
\begin{figure}[h]
\center
\begin{tikzpicture}
\draw[dotted]  (1.5,1) -- (3,1);
\draw (0,1) -- (1.5,1);
\draw (3,1)  -- (9,1) ; 
 \fill[fill=black] (0,1) circle (0.1);
 \fill[fill=black] (1.5,1) circle (0.1);
\fill[fill=black] (3,1) circle (0.1);
\fill[fill=black] (4.5,1) circle (0.1);
\fill[fill=black] (6,1) circle (0.1);
\fill[fill=black] (7.5,1) circle (0.1);
\fill[fill=black] (9,1) circle (0.1);


\draw (3,0)  -- (7.5,0) ; 
\draw[dotted]  (1.5,0) -- (3,0);
\draw (0,0) -- (1.5,0);

 \fill[fill=black] (0,0) circle (0.1);
 \fill[fill=black] (1.5,0) circle (0.1);
\fill[fill=black] (3,0) circle (0.1);
\fill[fill=black] (4.5,0) circle (0.1);
\fill[fill=black] (6,0) circle (0.1);
\fill[fill=black] (7.5,0) circle (0.1);
\fill[fill=black] (9,0) circle (0.1);

\draw (3,-1)  -- (7.5,-1) ; 
\draw[dotted]  (1.5,-1) -- (3,-1);
\draw (0,-1) -- (1.5,-1);
\fill[fill=black] (0,-1) circle (0.1);
 \fill[fill=black] (1.5,-1) circle (0.1);
\fill[fill=black] (3,-1) circle (0.1);
\fill[fill=black] (4.5,-1) circle (0.1);
\fill[fill=black] (6,-1) circle (0.1);
\fill[fill=black] (7.5,-1) circle (0.1);

\end{tikzpicture}
\put(-77,10){\footnotesize $\lambda_{n-1}$}
\put(-120,10){\footnotesize $\lambda_{n-2}$}
\put(-242,10){\footnotesize $\lambda_{1}$}
\put(-290,0){\footnotesize $G/e_n$}
\put(-86.5,40){\scriptsize $\lambda_{n-1}+\lambda_n$}
\put(-120,40){\footnotesize $\lambda_{n-2}$}
\put(-242,40){\footnotesize $\lambda_{1}$}
\put(-297,32){\footnotesize $G-e_n$}
\put(-30,70){\footnotesize $\lambda_n$}
\put(-77,70){\footnotesize $\lambda_{n-1}$}
\put(-242,70){\footnotesize $\lambda_{1}$}
\put(-286,60){\footnotesize $G$}
\caption{The chain graph $G$, $G-e_n$ and $G/e_n$,
and their weights.}
\label{graph1}
\end{figure}
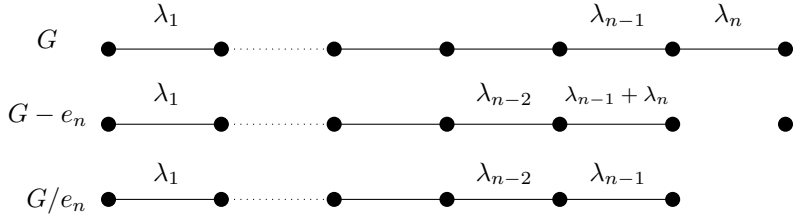

\

Performing successively the recurrence relation on a graph $G$, one ends with 
the boundary conditions (b.c.) $T(E_m; \emptyset) = q_m \in R$, 
where $E_m$ is 
the graph made with only $m$ vertices and no edges. 
We refer to these b.c. as terminal forms of the contraction-deletion. 
$T$ is therefore a sum of contributions of the rough form $\approx (\prod \alpha(\cdot))(\prod \beta(\cdot))T(E_m;\emptyset)$.  
At the end of the sequence
of contraction-deletion operations, each terminal form can be uniquely mapped to a spanning subgraph of $G$. The existence of that bijection is guaranteed by the fact
that the number of spanning subgraphs of $G$ is $2^{E}$,
and this is precisely the number  terminal forms\footnote{
The b.c. define the leaves in the abstract rooted tree of the contraction-deletion procedure with root $G$;  the two first nodes in this tree 
represent $G/e$ and $G-e$. Then we insert an edge between $G$  and $G/e$ and $G-e$ and so on. 
In this way, we define iteratively the set of contraction-deletions paths in $G$.}. 
If we start from an edge-labelled graph $G$, the bijection becomes
more apparent: starting from $G$ and following the unique path 
from the root to a given leaf $E_m$, we get a sequence of contracted and deleted edges of $G$. The unique spanning subgraph associated with the terminal form $E_m$ is the spanning subgraph of $G$ the edge set of which is the set 
of edges which have been contracted along the path to get $E_m$.  Given the discussion above, the following statement is therefore straightforward:
\begin{lemma}
The function $T(G;\{\lambda_e\})$ is a polynomial 
in the $\alpha$'s and $\beta$'s which admits a spanning subgraph 
expansion. 
\end{lemma}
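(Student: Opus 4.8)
The plan is to argue by induction on the number of edges $n = |E|$, using the recurrence relation as the inductive step and the boundary conditions $T(E_m;\emptyset) = q_m$ as the base case. The whole content of the statement is already foreshadowed in the preceding discussion (the count $2^{|E|}$ of spanning subgraphs matching the number of terminal forms), so the work is to organize that discussion into a clean induction.

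For the base case $n = 0$ I would note that $G$ has no edges, hence $G = E_m$ with $m = |V|$, and $T(G;\emptyset) = q_m$ by the terminal-form boundary condition. This is a degree-zero monomial in the (empty) collection of $\alpha$'s and $\beta$'s, and $E_m$ has exactly one spanning subgraph, namely itself, so the claimed expansion holds trivially.

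For the inductive step I would assume the statement for every graph with at most $n-1$ edges, fix an edge $e$ (say the first in a chosen ordering $\sigma$), and apply the recurrence once. Since $G/e$ and $G-e$ each have $n-1$ edges, the induction hypothesis provides spanning subgraph expansions of $T(G/e;\{\tilde\lambda\})$ and $T(G-e;\{\lambda'\})$. The key step is the standard bijection: spanning subgraphs of $G/e$ correspond to spanning subgraphs of $G$ containing $e$ (reinsert $e$), while spanning subgraphs of $G-e$ correspond to those omitting $e$, and together these two families partition the $2^{n}$ spanning subgraphs of $G$. Multiplying the $G/e$-expansion by $\alpha(\lambda_e)$ appends an $\alpha$-factor indexed by $e$ to each term whose subgraph now contains $e$, and multiplying the $G-e$-expansion by $\beta(\lambda_e)$ appends a $\beta$-factor indexed by $e$ to each term whose subgraph omits $e$. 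Summing the two contributions then yields
\[
T(G; \{\lambda_e\}) = \sum_{S \subseteq E} \Big(\prod_{f \in S} \alpha(\cdot)\Big)\Big(\prod_{f \notin S} \beta(\cdot)\Big)\, q_{m(S)} \,,
\]
a sum over spanning subgraphs $S$ in which each term is a degree-$n$ monomial in the $\alpha$'s and $\beta$'s times the terminal value $q_{m(S)}$ reached along the corresponding root-to-leaf path, which is exactly the desired expansion.

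The main point requiring care will be the bookkeeping of the arguments of $\alpha$ and $\beta$. Because contraction and deletion modify the surviving weights (the $\tilde\lambda$ and $\lambda'$ being functions of the original $\lambda$'s), the precise ring elements attached to a given spanning subgraph depend on the processing order, so the coefficients $(\cdot)$ and the terminal value $q_{m(S)}$ are well-defined only once $\sigma$ is fixed. I expect this obstacle to be conceptual rather than technical: one must resist expecting order-invariance of the coefficients, which is precisely the ordering phenomenon the paper sets out to study, and verify only that the polynomial structure and its indexing by spanning subgraphs survive each application of the recurrence. Since the bijection between the spanning subgraphs of $G$ and the disjoint union of those of $G/e$ and $G-e$ is itself weight-independent, the existence of the expansion is unaffected, and the induction closes.
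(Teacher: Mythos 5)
Your induction on $|E|$ is correct and is essentially the paper's own argument in formalized dress: the paper establishes the spanning-subgraph expansion by identifying the $2^{|E|}$ terminal forms (leaves of the contraction-deletion tree) with spanning subgraphs via the set of contracted edges along each root-to-leaf path, which is exactly the bijection your inductive step applies one edge at a time. You also correctly flag the one point the paper emphasizes, namely that the arguments of the $\alpha$'s and $\beta$'s (though not the indexing by spanning subgraphs) depend on the chosen ordering.
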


The function $T(G;-)$  factorizes
along disconnected graphs, i.e. $T(G_1 \cup G_2; -) = T(G_1;-)T(G_2;-)$, if each contribution of spanning subgraph $A=A_1 \cup A_2$ of $G=G_1 \cup G_2$ will factorize between spanning subgraphs $A_1 \subset G_1$ and $A_2\subset G_2$.  
Another important operation on abstract graphs is called the vertex-union or one-point-join operation on graphs. Under this operation, the ordinary Tutte factorizes $T(G_1 \cdot G_2)=T(G_1)T(G_2)$.
In this work, we will check if these properties  are satisfied by the new graph polynomial we introduce.

We must be more specific about the type of functions
$\tilde\lambda_{\tilde e}$ and $\lambda'_{e'}$ that we will be interested in. 
Let us consider $R$  unital, a graph $G$ and its line graph $\mathcal L(G)$ \cite{Tutte}; consider moreover the
adjacency matrix of $\mathcal L(G)$ that, by a slight abuse of
notation, we write $A^{G}$. Given $G$ and
an edge $e$ of $G$, we introduce the functions:

\

$\tilde\lambda: E(G/e) \to R$, \; $\tilde\lambda(\tilde{e})=\lambda_{\tilde{e}} + \epsilon\, A^{G}_{\tilde{e}e}\lambda_e$\,; \quad $\epsilon \in R$\,;

\

$\lambda': E(G-e) \to R$, \; $\lambda'(e')=\lambda_{e'}
+ \epsilon' \,A^{G}_{e'e}\lambda_e$\,;\quad$\epsilon' \in R$\,.

\

The functions $\tilde\lambda$ 
and $\lambda'$ depend on a graph $G$ and one of its edges $e$;  these might introduce shifts on the weights of all remaining edges (of $G/e$ and of $G-e$) sharing vertices with $e$. As an example of the variable shift mentioned above, in Fig.\ref{fig:example} we show an example of deletion involving the line graph $A^G$: after each deletion (or contraction), the deleted variable modifies the edge variables of the resulting graph.

\begin{figure}
\centering
\includegraphics[scale=0.6]{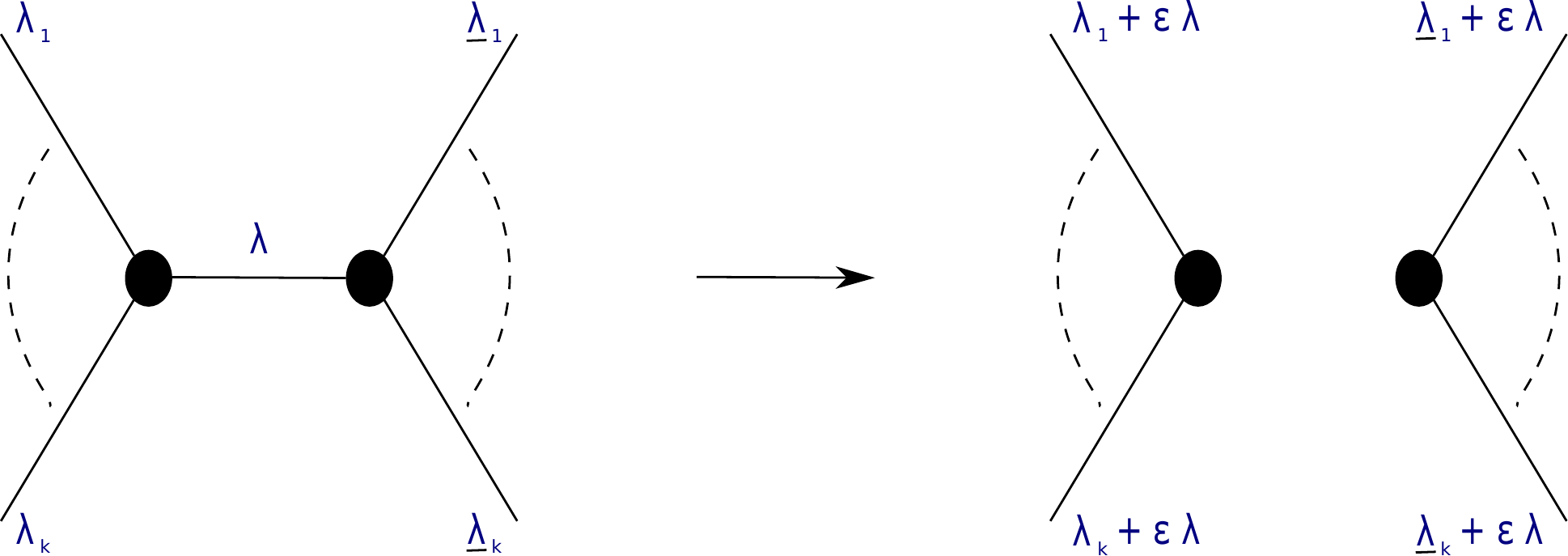}
\caption{An example of deletion
and ``attaching" the deleted edge variable to the nearby edge variables.
 A similar rule applies also to the contraction, with the replacement of $\epsilon^\prime$ instead of $\epsilon$.}
\label{fig:example}
\end{figure}

We are now in the position of introducing the problem treated in this work.
Let $G$ be a graph with edge set $E$,  $\epsilon, \epsilon' \in R$.
We are seeking for a function solution of the recurrence rule:
\bea\label{eq:REC}
\forall e \in E, \quad 
T^{\eps,\eps'}(G; \{\lambda_e\}_{e\in G})
&=&
 \alpha(\lambda_e) \,T^{\eps,\eps'}(G/e; \{\lambda_{\tilde{e}} + \epsilon\, A^{G}_{\tilde{e}e}\lambda_e\}_{\tilde{e}\in E(G/e)})
\cr
\cr
&+&  \beta(\lambda_e)\, T^{\eps,\eps'}(G-e; \{\lambda_{e'}
+ \epsilon' \,A^{G}_{e'e}\lambda_e\}_{e' \in E(G-e)})\,; \cr\cr
T^{\eps,\eps'}(E_m; \emptyset) &=& q_m \,. 
\eea
For $\epsilon=\epsilon^\prime=0$, this combinatorial object reduces to Tutte polynomial.
However, as we will see, the general case  requires extra care.

\subsection{Solving the ordering-dependent Tutte polynomial}

The general recurrence rule \eqref{eq:REC} is radically different from
an ordinary contraction-deletion rule because of the ordering-dependent
evaluation of the recurrence to specify $T^{\eps,\eps'}$.
Note that, as stated above, since the recurrence rule holds for any initial edge $e$,  it is not clear whether or not \eqref{eq:REC} has a general solution for arbitrary parameters  $\eps$ and $\eps'$.  However, 
if we specify an ordering in the edges, which is determined by a permutation $\sigma \in \mathfrak{S}(n)$, $(e_{\sigma(1)}, \dots, e_{\sigma(n)})$ and  if we perform a contraction and deletion of each edge in the order $e_{\s(1)} \to  e_{\s(2)} \to \dots \to e_{\s(n)}$, 
 then we claim that the problem \eqref{eq:REC}  has a solution if $T^{\eps,\eps'}$ depends on the ordering $\s$. 
To write an explicit expression of this solution, we first
 introduce compact notations. 
Note that, since the general case can be easily recovered from the following results, we will focus
on the simple case when $\s=id$, namely, when  the edges
are contracted and deleted according to the sequence $1 \to 2 \to 3 \to \dots \to n$.

We introduce, for a given edge $e_j$, 
\bea
G \cdot_j =  \left\{\begin{array}{l}
 G/e_j\,, \;\; \text{if $e_j$ is contracted } \\
G- e_j\,, \;\; \text{if  $e_j$ is deleted } 
\end{array}
\right.
\eea
so that, for successive operations, the notation  $G\cdot_1 \cdot_2  \cdots \cdot_k$, $k \leq n,$ provides a unique identification of the series of contractions or deletions. Then, given $(m,n) \notin \{1,\dots, k\}^{\times 2}$, 
we introduce 
\beq
A^{(k)}_{mn}=
A^{G\cdot_1 \cdot_2  \cdots \cdot_k}_{e_m,e_n} 
\eeq
the matrix element of the adjacency matrix of $\mathcal L(G\cdot_1 \cdot_2  \cdots \cdot_k)$ associated with vertices dual of $e_m$ and $e_n$. 
By convention, we set $A^{(0)} = A^G$.
In the same vein, simplifying the labelling of $\lambda_{e_j}$, we shall use $\lambda_j$. 

Consider an edge $e_j$ and a spanning subgraph $B$ of $G$, then define
the following function on $R$
\bea
\gamma_B(x_j) = \left\{\begin{array}{l}
\alpha(x_j)\,, \;\; \text{if $e_j \in B$ } \\
\beta(x_j)\,, \;\; \text{if  $e_j \notin B$}  
\end{array}
\right.
\eea
for $x_j=x(e_j)\in R$, and
\bea
\eps_B^{j} = \left\{\begin{array}{l}
\eps\,,\;\;  \text{if $e_j \in B$ } \\
\eps'\,, \;\; \text{if  $e_j \notin B$}
\end{array}
\right.
\eea

The following statement holds:
\begin{thm}\label{thm1}
Let $G$ be a graph with edge set $E$, $|E|=n$. 
Consider a labelling of edges of $G$ such that $\{e_j\}_{j=1,\dots, n}$,
and a family of edge weights $\{\lambda_{e_j}=\lambda_j\}_{j=1,\dots, n}$.
Let $q$, $\eps$ and $\eps'$ be elements in a commutative and unital ring $R$. 
A function $P^{\eps,\eps'}(G;q;\{\lambda_j\}) \in R$, solving the recurrence relation 
\bea
&& 
P^{\eps,\eps'}(G; q;\{\lambda_j\})
= \crcr
&& \alpha( \lambda_{1} ) \,P^{\eps,\eps'} ( G/e_{1};\,q;\,
\{\lambda_{ j } + \epsilon A^{(0)}_{ j 1 }\lambda_{1}\}_{j\ne 1}) 
+  \beta(\lambda_{1})\, P^{\eps,\eps'}(G-e_{1};\,q;\, \{\lambda_{j}
+ \epsilon' A^{(0)}_{j1}\lambda_{1}\}_{j \ne 1})\,; 
 \label{1rec}  \\\cr
&&
P^{\eps,\eps'}(E_m;\,q; \emptyset) = q^m \,, 
\label{bc}
\eea
is of the form
\bea\label{solu}
P^{\eps,\eps'}(G;q;\{\lambda_j\})
 = \sum_{B \subset G} q^{k(B)} \; \prod_{j=1}^n \gamma_B (\widehat\lambda_{j;B} \big(\{\lambda_l\}_{l\leq j}) \big) \,, 
\eea
where the sum is performed over the set of spanning subgraphs $B$ of $G$, $k(B)$ is the number of connected components of $B$, 
and, $\forall B$, one has
\bea\label{ccc}
\forall k\geq 1\,, &&
\widehat\lambda_{k;B} (\{\lambda_l\}_{l\leq k})) = \sum_{\ell =1}^{k} C_{k\ell;B}(G;\eps,\eps')\lambda_\ell\, ,   \cr\cr
\text{ for  } \, k=\ell\geq 1\, , &&
C_{kk;B}(G;\eps,\eps')= 1\, , \cr\cr
\text{ for  } \, k=\ell + 1\,, &&
C_{k(k-1);B}(G;\eps,\eps')= 
\mathcal{A}_{k (k-1);B} (Q_0)\,, \cr\cr
\forall k>\ell+1\geq 1\,, &&
C_{k\ell;B}(G;\eps,\eps')= \sum_{p=0}^{k-\ell-1}
\sum_{\stackrel{Q_p \subseteq \{\ell +1,\dots, k-1\}}{ |Q_p|=p }}
\mathcal{A}_{k\ell;B} (Q_p)\,, 
\eea 
where
\bea
&&
\text{for }\,  Q_0 = \emptyset, \;\; \mathcal{A}_{k\ell;B} (Q_0) = \eps_B^{\ell} A^{(\ell - 1)}_{k\ell} \,,
\cr\cr
&&
\text{for } \, Q_{p\geq 1} = \{j^Q_1,\dots, j^Q_p\}\neq \emptyset,\;
\; \ell + 1\leq j^Q_1< j_2^Q <\dots < j^Q_{p}  \leq k-1\,,  \cr\cr
&&
\mathcal{A}_{k\ell;B} (Q_p) = 
 \Big(\eps^\ell_B \prod_{a=1}^p \eps^{j^Q_a}_B \Big) 
\Big[A^{(j^Q_p-1)}_{k j^Q_p} \Big(\prod_{i=2}^p A^{(j^Q_{i-1}-1)}_{j^Q_i j^Q_{i-1}} \Big)A^{(\ell-1)}_{ j^Q_1 \ell} \Big]\;. 
\label{prodQ}
\eea
\end{thm}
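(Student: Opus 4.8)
The plan is to prove \eqref{solu} by induction on the number of edges $n = |E|$, peeling off the first edge $e_1$ through the recurrence \eqref{1rec} and invoking the induction hypothesis on the two $(n-1)$-edge minors $G/e_1$ and $G-e_1$. For the base case $n=0$ the graph is $E_m$, whose unique spanning subgraph is $B = E_m$ with $k(B) = m$ and empty product, so the right-hand side of \eqref{solu} returns $q^m$, matching \eqref{bc}.

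For the inductive step I would first record the two structural reductions that amount to the classical Fortuin--Kasteleyn bookkeeping. Spanning subgraphs $B \subset G$ split according to whether $e_1 \in B$ (contraction) or $e_1 \notin B$ (deletion); in the first case $B$ is in bijection with a spanning subgraph $B'$ of $G/e_1$ with $k_G(B) = k_{G/e_1}(B')$, and in the second case with a spanning subgraph of $G-e_1$, again preserving the number of connected components. Since $C_{11;B} = 1$ forces $\widehat\lambda_{1;B} = \lambda_1$, the edge-$e_1$ factor in \eqref{solu} is $\gamma_B(\lambda_1)$, i.e. $\alpha(\lambda_1)$ or $\beta(\lambda_1)$, which matches the prefactors in \eqref{1rec}; this step reproduces both the $q^{k(B)}$ weight and the $\alpha/\beta$ prefactors.

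The crux is to show that the effective weights transform correctly. Writing $H = G\cdot_1$ for the relevant minor and $\mu_\ell = \lambda_\ell + \eps_B^{1} A^{(0)}_{\ell 1}\lambda_1$ for the shifted weights fed into \eqref{1rec}, the induction hypothesis gives the minor's effective weight $\widehat\mu_{k;B'} = \sum_{\ell=2}^{k} C^{H}_{k\ell;B'}\,\mu_\ell$. Substituting the shift and collecting the coefficient of each $\lambda_\ell$, I must identify $\widehat\mu_{k;B'}$ with $\widehat\lambda_{k;B}$. Two facts are needed. First, for $2 \le \ell \le k$ one has $C^{H}_{k\ell;B'} = C_{k\ell;B}$: the defining formulas \eqref{ccc}--\eqref{prodQ} for these coefficients only call on the adjacency matrices $A^{(j-1)}$ with $j \ge \ell \ge 2$, on the subsets $Q_p \subseteq \{\ell+1,\dots,k-1\}$, and on the weights $\eps^{j}_B$ with $j \ge 2$, all of which are literally unchanged when the already-processed edge $e_1$ is regarded as part of $H$, because the successive minors $G\cdot_1\cdots\cdot_{j-1}$ and their line-graph adjacencies do not depend on whether $\cdot_1$ is read off $G$ or prepended to $H$. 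Second, the coefficient of $\lambda_1$ must satisfy the path-decomposition identity
\[
C_{k1;B} = \eps_B^{1}\sum_{\ell=2}^{k} A^{(0)}_{\ell 1}\, C_{k\ell;B}\,,
\]
which is exactly what the substitution produces.

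The main obstacle, and the only genuine computation, is verifying this last identity directly from the explicit double sum \eqref{ccc}--\eqref{prodQ}. I would do this by classifying each index subset $Q_p \subseteq \{2,\dots,k-1\}$ appearing in $C_{k1;B}$ according to its smallest element $j_1 =: \ell$, with the empty set contributing the direct term $\ell = k$: factoring the first hop $\eps_B^{1} A^{(0)}_{\ell 1}$ out of $\mathcal{A}_{k1;B}(Q_p)$ leaves precisely $\mathcal{A}_{k\ell;B}(Q_p \setminus \{j_1\})$ with $Q_p\setminus\{j_1\} \subseteq \{\ell+1,\dots,k-1\}$, so summing over the remaining subsets rebuilds $C_{k\ell;B}$ and summing over $\ell$ yields the claimed identity. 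Combining the two facts shows $\widehat\mu_{k;B'} = \widehat\lambda_{k;B}$ for every $k$, so the contracted and deleted contributions assemble into $\sum_{B \subset G} q^{k(B)} \prod_{j=1}^{n} \gamma_B(\widehat\lambda_{j;B})$, completing the induction.
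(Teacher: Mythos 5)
Your proposal is correct and its overall architecture coincides with the paper's: verify that the state sum \eqref{solu} satisfies \eqref{1rec} by splitting the spanning subgraphs according to whether $e_1\in B$, peel off the factor $\gamma_B(\lambda_1)$ using $C_{11;B}=1$, observe that the coefficients $C_{k\ell;B}$ for $k,\ell\geq 2$ are insensitive to whether $e_1$ is processed inside $G$ or regarded as already absorbed into the minor $G\cdot_1$, and reduce everything to the single identity $C_{k1;B}=\eps^{1}_B\sum_{\ell>1}^{k}A^{(0)}_{\ell 1}C_{k\ell;B}$, which is exactly the paper's Lemma \ref{lemCkl}. Where you genuinely depart from the paper is in how you establish that identity. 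The paper proves it by induction on $k$ (Appendix \ref{app:lemckl}), repeatedly splitting the subsets $Q_p$ according to whether they contain $k$, then $k-1$, and so on, and invoking the induction hypothesis at each stage --- a computation that runs over several pages. You instead give a direct bijective decomposition: classify each $Q_p\subseteq\{2,\dots,k-1\}$ by its minimal element $\ell=j^Q_1$ (with $Q_0=\emptyset$ assigned to $\ell=k$), factor the ``first hop'' $\eps^{1}_B A^{(0)}_{\ell 1}$ out of $\mathcal{A}_{k1;B}(Q_p)$, and recognize the remainder as $\mathcal{A}_{k\ell;B}(Q_p\setminus\{\ell\})$ with $Q_p\setminus\{\ell\}$ ranging freely over subsets of $\{\ell+1,\dots,k-1\}$; commutativity of $R$ lets you reorder the matrix factors so that the residual product matches the definition \eqref{prodQ} exactly, and resumming rebuilds $C_{k\ell;B}$. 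This is a cleaner and shorter proof of the lemma, and it makes transparent the ``path'' interpretation of the coefficients $C_{k\ell;B}$ as sums over increasing chains of intermediate edges; the paper's induction, by contrast, is closer to a brute-force verification. The one point you should make fully explicit if you write this up is the first of your two facts --- that $C^{G\cdot_1}_{k\ell;B'}=C_{k\ell;B}$ under the relabelling of the minor's edges --- since the adjacency matrices $A^{(j-1)}$ appearing in the minor's coefficients must be checked to agree with $A^{G\cdot_1\cdots\cdot_{j-1}}$ under the inherited ordering; but this is a bookkeeping verification, and the paper leaves it implicit as well.
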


\

To prove this proposition, we rely on the following
\begin{lemma}\label{lemCkl}
For all $k > 1$, and all $B$, $B\subset G$, 
\beq\label{bin}
C_{k1;B}(G;\eps,\eps') = \eps^{1}_B \sum_{\ell>1}^kA^{(0)}_{\ell 1}  C_{k\ell;B}(G;\eps,\eps') \,.
\eeq
\end{lemma}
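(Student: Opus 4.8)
The plan is to prove Lemma~\ref{lemCkl} by setting up a recursion for the coefficients $C_{k\ell;B}$ and then extracting the special case $\ell=1$. First I would observe that the shifted weights $\widehat\lambda_{k;B}$ are built up inductively as one performs the contraction-deletion sequence: at step $j$ the edge $e_j$ carries the accumulated weight $\widehat\lambda_{j;B}$, and when $e_j$ is processed (contracted or deleted according to whether $e_j\in B$), this weight is distributed to each surviving neighbour $e_k$ ($k>j$) with the multiplier $\eps^{j}_B A^{(j-1)}_{kj}$. Unfolding this one-step update gives the recursion
\bea\label{eq:Crec}
\widehat\lambda_{k;B} = \lambda_k + \sum_{j=1}^{k-1} \eps^{j}_B A^{(j-1)}_{kj}\, \widehat\lambda_{j;B}\,,
\eea
which, read off coefficient-by-coefficient in the $\lambda_\ell$, is precisely the content of \eqref{ccc}. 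The key point is that the formula \eqref{prodQ} is exactly the solution of this linear triangular recursion: a term indexed by $Q_p=\{j^Q_1<\dots<j^Q_p\}$ records a path $\ell \to j^Q_1 \to \dots \to j^Q_p \to k$ through the intermediate processing steps, with each hop contributing its adjacency factor $A$ at the appropriate superscript and each processed edge contributing its $\eps_B$.

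Next I would specialize to $\ell=1$. Since $\lambda_1$ is the very first weight and is never itself a shift of anything earlier (indeed $\widehat\lambda_{1;B}=\lambda_1$, as $C_{11;B}=1$ with no lower terms), the coefficient $C_{k1;B}$ of $\lambda_1$ in $\widehat\lambda_{k;B}$ can be computed directly from \eqref{eq:Crec}: extracting the coefficient of $\lambda_1$ on both sides gives
\bea\label{eq:Cextract}
C_{k1;B} = \sum_{j=1}^{k-1} \eps^{j}_B A^{(j-1)}_{kj}\, C_{j1;B}\,.
\eea
The term $j=1$ contributes $\eps^{1}_B A^{(0)}_{k1} C_{11;B}=\eps^{1}_B A^{(0)}_{k1}$, and the first nontrivial rearrangement is to recognize that this is the $\ell=1$, $p=0$ seed, while the remaining terms repackage the deeper paths. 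The claimed identity \eqref{bin}, however, has a different shape: it expresses $C_{k1;B}$ as $\eps^{1}_B \sum_{\ell=2}^{k} A^{(0)}_{\ell 1} C_{k\ell;B}$, i.e.\ it peels off the \emph{first} edge of each path (the initial hop out of vertex $1$) rather than the last. So the real task is to show that summing over the first step $1\to \ell$ with the residual coefficient $C_{k\ell;B}$ reproduces the same total.

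The main obstacle is therefore a combinatorial reindexing: I must show that grouping the path-sum \eqref{prodQ} for $C_{k1;B}$ according to the first vertex $\ell$ visited after $1$ yields exactly $\eps^{1}_B A^{(0)}_{\ell 1} C_{k\ell;B}$ for that group. Concretely, a generic term for $C_{k1;B}$ corresponds to a path $1 \to \ell \to (\text{intermediate }Q') \to k$; factoring out the initial hop $\eps^{1}_B A^{(0)}_{\ell 1}$ leaves a path from $\ell$ to $k$, and I would verify that the residual factor is precisely a term of $C_{k\ell;B}$ with its own index set $Q'\subseteq\{\ell+1,\dots,k-1\}$ and the correct superscripts on the adjacency matrices. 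The delicate point is matching the superscripts: in $C_{k1;B}$ the hop out of $1$ carries $A^{(0)}$, and every subsequent factor $A^{(j-1)}$ depends only on the source vertex of the hop, not on where the path started, so the residual factors are insensitive to having removed the leading step. Once this source-only dependence of the superscripts is made explicit, the bijection between terms of $\eps^{1}_B\sum_{\ell>1} A^{(0)}_{\ell 1} C_{k\ell;B}$ and terms of $C_{k1;B}$ is exact and the identity follows; the only bookkeeping is to check the two boundary cases ($p=0$, and the top term $\ell=k-1$) separately so that the empty-$Q$ conventions in \eqref{prodQ} line up.
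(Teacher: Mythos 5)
Your proposal is correct, but it proves the lemma by a genuinely different route than the paper. The paper proceeds by induction on $k$: it evaluates $\eps^1_B\sum_{\ell>1}^{k+1}A^{(0)}_{\ell 1}C_{(k+1)\ell;B}$, repeatedly splits the sums over subsets $Q_p$ according to whether they contain $k$, then $k-1$, etc., invokes the induction hypothesis at each lower order to arrive at the identity $\eps^1_B\sum_{\ell>1}^{k+1}A^{(0)}_{\ell 1}C_{(k+1)\ell;B}=\sum_{\ell=1}^{k}\eps^{\ell}_BA^{(\ell-1)}_{(k+1)\ell}C_{\ell 1;B}$ (which is precisely your ``last-hop'' relation \eqref{eq:Cextract} at order $k+1$), and only then finishes with a re-indexing $Q_p\cup\{\ell\}\mapsto Q_{p+1}$. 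Your argument dispenses with the induction entirely: the last paragraph of your proposal is a direct bijection on the terms of the defining sum \eqref{prodQ} for $C_{k1;B}$, grouping each subset $Q_p$ by its smallest element $\ell=j^Q_1$ (the first hop out of edge $1$), factoring out $\eps^1_BA^{(0)}_{\ell 1}$, and identifying the residue with $\mathcal{A}_{k\ell;B}(Q_p\setminus\{\ell\})$; the empty set maps to the $\ell=k$ term via $C_{kk;B}=1$. The decisive observation --- that every factor $A^{(s-1)}_{ts}$ and every $\eps^{s}_B$ in \eqref{prodQ} is indexed only by the source $s$ of its hop, so removing the leading step does not disturb the remaining factors --- is exactly right, and it is what makes the bijection exact; I verified that the index bookkeeping (including the $p=0$ and $\ell=k-1$ boundary cases you flag) closes. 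Your approach is shorter and more transparent than the paper's, at the modest cost that your preliminary steps (the triangular recursion \eqref{eq:Crec} and the extraction \eqref{eq:Cextract}) are strictly speaking dispensable motivation: the lemma concerns the explicitly defined coefficients, and your first-hop reindexing proves it on its own. If you write this up, you should either drop \eqref{eq:Crec}--\eqref{eq:Cextract} or prove that \eqref{prodQ} solves \eqref{eq:Crec} (a last-hop grouping, dual to your first-hop one), since that claim is asserted rather than established.
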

\begin{proof}
 See Appendix \ref{app:lemckl}. 
\end{proof}
 
\proof[Proof of Theorem \ref{thm1}] 
We  note that the relation \eqref{bc} is a direct consequence of \eqref{solu}. 
We concentrate on the recurrence by applying the contraction and deletion to $e_1$. We proceed like in the proof of the regular Tutte polynomial case: consider all terms in \eqref{solu} associated with subgraphs of $G$
having $e_1$ and all terms associated with those which do not. We must prove that
\bea\label{innotin}
 \sum_{B \subset G;\, e_1\in B} q^{k(B)} \; \prod_{j=1}^n \gamma_B (\widehat\lambda_{j;B} (\{\lambda_l\}))  &=& \alpha( \lambda_{1} ) \,P^{\eps,\eps'} ( G/e_{1};q;\, 
\{\lambda_{ j } + \epsilon A^{(0)}_{ j 1 }\lambda_{1}\}_{j >  1})  \cr\cr
& =& \alpha(\lambda_1)\sum_{B  \subset G/e_1} q^{k(B)} \; \prod_{j=2}^{n} \gamma_B (\widehat\lambda_{j;B} (\{\lambda_l+ \epsilon A^{(0)}_{ l 1 }\lambda_{1}\})) \,,
\cr\cr
\sum_{B \subset G; \,e_1\notin B} q^{k(B)} \; \prod_{j=1}^n \gamma_B (\widehat\lambda_{j;B} (\{\lambda_l\}))  &=&   \beta(\lambda_{1})\, P^{\eps,\eps'}(G-e_{1};q;\, \{\lambda_{j}
+ \epsilon' A^{(0)}_{j1}\lambda_{1}\}_{j >  1})\cr\cr
& =&  \beta(\lambda_1)\sum_{B  \subset G - e_1} q^{k(B)} \; \prod_{j=2}^{n} \gamma_B (\widehat\lambda_{j;B} (\{\lambda_l+ \epsilon' A^{(0)}_{ l 1 }\lambda_{1}\})) \,.\cr\cr
&& 
\eea
Since the variable $q$ which describe the number of connected components present no particular difficulty, we focus on the products. 
We will prove the above statement in the following way: to each subgraph $B \subset G$, and for each factor in the product $\prod_{j=1}^{n} \gamma_B( \widehat\lambda_j (-))$, we assign a unique
and equal factor in the product $\gamma_B(\lambda_1)\prod_{j=2}^{n}  \gamma_{B'}( \widehat\lambda_j(-))$, labelled by a
subgraph $B' \in G\cdot_1$. 

Let us now consider the expansion,  for a given subgraph $B$, 
\bea
&&
\prod_{j=1}^n \gamma_B (\widehat\lambda_{j;B} (\{\lambda_l\})) 
= \gamma_B \Big(C_{1 1;B}(G;\eps,\eps')\lambda_1 \Big) 
\prod_{j=2}^n \gamma_B (\widehat\lambda_{j;B} (\{\lambda_l\}))  \cr\cr
&& = \gamma_B (\lambda_1) 
\prod_{j=2}^n \gamma_B \Big(C_{j 1;B}(G;\eps,\eps')\lambda_1 + 
\sum_{\ell > 1}^{k} C_{j\ell;B}(G;\eps,\eps')\lambda_\ell \Big) \,.
\label{factors}
\eea
Using Lemma \ref{lemCkl}, we expand $C_{j 1;B}$ in terms of $C_{j\ell}$
and write 
\bea
\prod_{j=1}^n \gamma_B (\widehat\lambda_{j;B} (\{\lambda_l\})) 
 &=& \gamma_B (\lambda_1) 
\prod_{j=2}^n \gamma_B \Big(   \eps^{1}_B \sum_{\ell>1}^j A^{(0)}_{\ell 1}  C_{j\ell;B}(G;\eps,\eps')\lambda_1 + 
\sum_{\ell > 1}^{k} C_{j\ell;B}(G;\eps,\eps')\lambda_\ell \Big) \cr\cr
& =& 
\gamma_B (\lambda_1) 
\prod_{j=2}^n \gamma_B \Big( \sum_{\ell>1}^j  C_{j\ell;B}(G;\eps,\eps')
\left( \lambda_\ell + \eps^{1}_B A^{(0)}_{\ell 1}  \lambda_1  \right)  \Big)
 \cr\cr
& =&
 \gamma_B (\lambda_1) 
\prod_{j=2}^n \gamma_B \Big(\widehat\lambda_{j} \big(\{\lambda_\ell + \eps^{1}_B A^{(0)}_{\ell 1}  \lambda_1  \}_{1 <\ell \leq j} \big)  \Big) \,.
\label{factors2}
\eea
Now, depending on whether $e_1$ belongs to $B$ or not, we can provide more informations on the initial
factor $\gamma_B(\lambda_1)$ in \eqref{innotin}. If $e_1 \notin B$, then 
$B \subset G$ is uniquely mapped to $B \in G-e_1$,  with the same
$\gamma_B (\widehat\lambda_{j>1})$; if $e_1 \in B$,
then $B\subset G$ is uniquely mapped to $B/e_1$, mapping 
 $\gamma_B (\widehat\lambda_{j>1}\big)$ to 
$\gamma_{B/e_1}(\widehat\lambda_{j>1})$. 
Thus \eqref{innotin} holds. 
\qed

 To obtain the $\s$-ordering-dependent polynomial $P^{\eps,\eps'}(G; \s;q;\{\lambda_j\})$,
the sums and products involved in the definition of $\widehat\lambda_{j}$ and $C_{k\ell }$ in \eqref{ccc} and \eqref{prodQ} must be handled a little bit differently. Indeed, as they stand in these equations, they depend explicitly on the ordering $\s = id$.   Reformulated in a full-fledged form using the ordering $\s(1)\to \dots \to \s(k)$, we can turn these symbols into $\widehat\lambda_{\s(j)}$ and $C_{\s(k)\s(\ell) }$. However, this turns out to be cumbersome in notations and does not add much to the discussion. So we will refrain to give their general formulae in their most expanded form.

To make clear how the generalized Tutte polynomial is evaluated, we have provided a worked out example  in App. \ref{app:example}.

\subsection{Reductions} We do have the following limiting cases:

-  the reduction to the Tutte-FK polynomial is direct by setting
\beq
Z(G;q; (\lambda_e)) =  
P^{\eps=0,\eps'=0}(G;q; \{\lambda_e\})\quad \text{with}\quad
\alpha(\lambda_e) = \lambda_e \,, \quad 
\beta(\lambda_e) = 1- \lambda_e\,, 
\eeq
with appropriate constraints $q_m = q^m$, $q \in R$, 
which represents an ordering-independent system. 

- The recurrence  \eqref{eq:rec} obeyed by function $s_k$ can be recovered 
by setting $G= C_n$ the chain graph with $n$ edges, 
\beq
s_n(\{\lambda_e\}) = 
T^{\eps=1,\eps'=0}(C_n; q=1;\{\lambda_e\})\quad \text{with}\quad
\alpha(\lambda_e) = \frac{e^{\lambda_e}}{\lambda_e}\,, \quad 
\beta(\lambda_e) =  -\frac{1}{\lambda_e}\,, 
\eeq
where we have made the choice that the deletion procedure
will be associated with the ordering-dependency of the
system. 

- Setting all weights to a constant, i.e. $\lambda_j = \lambda$, and
the ordering parameters $\eps$ and $\eps'$ to 1, one should get a 3-parameter deformation of the standard Tutte polynomial (the parameters will be $(q,\alpha(\lambda),\beta(\lambda))$).
The deformed invariant can be explicitly identified
from the above formulae and its properties will be investigated elsewhere. 

\subsection{Properties of the generalized Tutte}

We investigate now if the polynomial $P^{\eps,\eps'}$ factorizes along connected components. Consider the graph $G=G_1 \cup G_2$ made of disconnected 
graphs $G_1$ and $G_2$, with sets $E_1$ and $E_2$ 
of edges, respectively. Let us denote $|E_1|=n_1$ and $|E_2|=n_2$.  Given an ordering $\s$ of the edges 
of $G$, we want to infer two orderings $\s_1\in\mathfrak{S}_{n_1}$ and $\s_2\in\mathfrak{S}_{n_1}$ of the edges in the two separate graphs $G_1$ and $G_2$, respectively. A simple listing achieves this: construct
the image of $\s_1$ by listing all edges of $G_1$ in their order of appearance in $\s$, and then choose the image of $\s_2$ to be the rest of the edges (of $G_2$) keeping again their order of appearance in $\s$. The two permutations $\s_1$ and 
$\s_2$ will be called induced orderings (on $E_1$ and $E_2$) from $\s$. 

Then, we have  
\bea
&&
P^{\eps,\eps'}(G;\s=id;q;\{\lambda_j\})
 = \sum_{B_1\cup B_2 \subset G_1 \cup G_2}
q^{k(B_1)}q^{k(B_2)}
\prod_{j=1}^{n_1+n_2}\gamma_{B_1 \cup B_2}(\widehat\lambda_j(\{\lambda_{l}\} )) \;; 
\cr\cr
&& 
\prod_{j=1}^{n_1+n_2}\gamma_{B_1 \cup B_2}(\widehat\lambda_j(\{\lambda_{\s(l)}\}) )
 = \Big[\prod_{j=1}^{n_1}\gamma_{B_1 \cup B_2}(\widehat\lambda_{\s_1(j)}(\{\lambda_{l}\}) )\Big]\Big[\prod_{j=1}^{n_2}\gamma_{B_1 \cup B_2}(\widehat\lambda_{\s_2(j)}(\{\lambda_{l}\}) )\Big].
\eea
For a disconnected graph $G_1 \cup G_2$,
there is an ordering of the edges of $G_1 \cup G_2$ such that
the matrix $A^{(k)}$ of $G_1 \cup G_2$ will always appear block diagonal
(an edge of $G_1$ cannot have a common vertex with 
any edge of $G_2$ after an  arbitrary number of contractions and deletions). This implies that the contraction or 
deletion of an edge in a graph, say $G_1$, does not affect the 
variables in the other graph $G_2$ (and vice versa). 
For any $k\leq n_1 + n_2$, 
there exists $k_1$ and $k_2$, such that $k=k_1 + k_2$ 
and 
\beq
(G_1 \cup G_2 )\cdot_1 \cdot_2 \dots \cdot_{k}
 = (G_1\cdot_{\s_1(1)} \cdot_{\s_1(2)} \dots \cdot_{\s_1(k_1)})
\cup (G_2\cdot_{\s_2(1)} \cdot_{\s_2(2)} \dots \cdot_{\s_2(k_2)}) \,. 
\eeq

We stress now two important points:  
(1) the definitions of $\widehat{\lambda}_{\s_1(j)}$
and $\widehat{\lambda}_{\s_2(j)}$ can be now 
restricted to the adjacency matrix of the line graphs of
$G_1$ and $G_2$, respectively, and (2), given $i\in \{1,2\}$, the function
$\gamma_{B_1 \cup B_2} (\widehat{\lambda}_{\s_{i}(j_i)})$,
$j_i=1,\dots, n_i$,  becomes independent of $B_{i'\ne i}$
and it is exactly the same as $\gamma_{B_i} (\widehat{\lambda}_{\s_{i}(j
_i)})$, $j_i=1,\dots, n_i$, $i=1,2$. Therefore, the following proposition
holds: 
\begin{proposition}[Factorization under disjoint union]
Let $G_1$ and $G_2$ two graphs. Consider an ordering of the
edges of $G_1 \cup G_2$, then there exist two induced orderings 
of the edges, $\s_1$ of $G_1$ and $\s_2$ of $G_2$, 
and  
\beq
P^{\eps,\eps'}(G_1 \cup G_2;\s=id;q;\{\lambda_j\}_{j=1}^{n_1+n_2})
 =P^{\eps,\eps'}(G_1;\s_1;q;\{\lambda_j\}_{j=1}^{n_1})
P^{\eps,\eps'}(G_2;\s_2;q;\{\lambda_j\}_{j=1}^{n_2})\,.
\eeq
\end{proposition}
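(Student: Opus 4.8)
The plan is to work directly from the closed spanning-subgraph expansion \eqref{solu} established in Theorem \ref{thm1}, rather than re-running the contraction--deletion recurrence. First I would expand $P^{\eps,\eps'}(G_1\cup G_2;\s=id;q;\{\lambda_j\})$ as a sum over spanning subgraphs $B$ of $G_1\cup G_2$. The starting point is the elementary but essential remark that every spanning subgraph $B$ of a disjoint union factorizes uniquely as $B=B_1\cup B_2$ with $B_1\subset G_1$ and $B_2\subset G_2$, and that the number of connected components is additive, $k(B)=k(B_1)+k(B_2)$. This immediately splits the prefactor as $q^{k(B)}=q^{k(B_1)}q^{k(B_2)}$ and turns the single sum over $B$ into a double sum over the pairs $(B_1,B_2)$.

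Next I would handle the product $\prod_{j=1}^{n_1+n_2}\gamma_{B_1\cup B_2}(\widehat\lambda_j(-))$. Using the induced orderings $\s_1,\s_2$ defined in the paragraph preceding the statement, the $n_1+n_2$ factors reorder into a block of $n_1$ factors indexed by the edges of $G_1$ followed by a block of $n_2$ factors indexed by the edges of $G_2$; this is exactly the displayed factorization of the product recorded just above the proposition. The content that remains to be verified is that each factor in the $G_1$-block depends only on $B_1$ and on the weights of $G_1$, and symmetrically for the $G_2$-block.

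The main obstacle, and the heart of the argument, is to show that the shifted weights $\widehat\lambda_{\s_i(j)}$ genuinely decouple. Here I would invoke the block-diagonal structure of the line-graph adjacency matrices. Because $G_1$ and $G_2$ share no vertex, no edge of $G_1$ is ever adjacent in the line graph to an edge of $G_2$, and this property persists under contraction and deletion: contracting or deleting an edge inside $G_i$ can only merge or remove vertices internal to $G_i$ and never creates a common vertex across the two components. Hence every matrix $A^{(k)}$ arising along the sequence is block diagonal, $A^{(k)}_{mn}=0$ whenever $e_m,e_n$ lie in different components, and the identity
\[
(G_1\cup G_2)\cdot_1\cdots\cdot_k=(G_1\cdot_{\s_1(1)}\cdots\cdot_{\s_1(k_1)})\cup(G_2\cdot_{\s_2(1)}\cdots\cdot_{\s_2(k_2)})
\]
lets me identify the relevant entries $A^{(\ell-1)}_{k\ell}$ computed along the global process with the corresponding entries computed in the line graph of $G_i$ alone. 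Feeding this into the explicit coefficients $C_{k\ell;B}$ of \eqref{ccc}--\eqref{prodQ}, every term in the sum over intermediate index sets $Q_p$ that would connect an index of $G_1$ to an index of $G_2$ carries a vanishing adjacency factor and drops out, so that the surviving terms are precisely those internal to a single component. Consequently $\widehat\lambda_{\s_i(j)}$ depends only on the weights of $G_i$, and $\gamma_{B_1\cup B_2}(\widehat\lambda_{\s_i(j)})=\gamma_{B_i}(\widehat\lambda_{\s_i(j)})$ becomes independent of $B_{i'\neq i}$, as asserted in points (1) and (2) before the statement.

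With the decoupling in hand, the conclusion is routine distributivity: the summand factors as a function of $B_1$ and the $G_1$-weights times a function of $B_2$ and the $G_2$-weights, so the double sum splits as $\sum_{B_1,B_2}f(B_1)g(B_2)=\big(\sum_{B_1}f(B_1)\big)\big(\sum_{B_2}g(B_2)\big)$, and each factor is recognized as $P^{\eps,\eps'}(G_i;\s_i;q;\{\lambda_j\})$ by reading \eqref{solu} on $G_i$ with the ordering $\s_i$. I expect the only genuinely delicate bookkeeping to be the index-matching in the previous paragraph, namely checking that the stage labels $(\ell-1)$ in the global adjacency matrices coincide with the correct local stage labels under the induced ordering; once the block-diagonality and the displayed decomposition of $(G_1\cup G_2)\cdot_1\cdots\cdot_k$ are in place, this reduces to careful relabelling.
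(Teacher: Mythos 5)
Your proposal is correct and follows essentially the same route as the paper: the spanning-subgraph expansion of \eqref{solu} with $B=B_1\cup B_2$ and $k(B)=k(B_1)+k(B_2)$, the block-diagonality of the line-graph adjacency matrices persisting under contraction--deletion so that the $\widehat\lambda$'s and hence the $\gamma_{B_1\cup B_2}$ factors decouple into $\gamma_{B_1}$ and $\gamma_{B_2}$ blocks under the induced orderings, and finally distributivity of the double sum. Your explicit tracing of the vanishing cross-component terms through the coefficients $C_{k\ell;B}$ is a slightly more detailed rendering of what the paper leaves implicit, but it is the same argument.
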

Like the Tutte polynomial, 
the polynomial $P^{\eps,\eps'}$ obeys the
factorization property under disjoint union operation.
It is also immediate that given two orderings in $G_1$ 
and $G_2$, one can trivially construct an ordering of edges of $G_1 \cup G_2$ (by concatenating these)
such that the above equality still holds. 
However, the factorization property 
under one-point-join operation will  fail for 
$P^{\eps,\eps'}$ because joining two graphs 
by a vertex affects the structure of the adjacency
matrices. Simple  counter-examples exist can be easily checked by the 
reader.

\subsection{Towards a noncommutative Tutte polynomial}

Bollob\'as and Riordan \cite{BR}
mention the likely existence of a Tutte polynomial on 
a noncommutative ring. We comment here another possibility to 
identify a noncommutative Tutte polynomial 
using a noncommutative/nonassociative algebra of
weight functions which is based on the 
generalized polynomial found in this work. 
For simplicity, and because we need an  integration measure,
 we will restrict to $R=\mathbb{R}$, the field of real numbers. 
The discussion below is purely formal. However, it can certainly be made 
rigorous using the appropriate functional space.

The polynomial $P^{\eps,\eps'}(G;-)$ involves several convolutions of functions $\alpha$ and $\beta$ that we symbolize by $\gamma_B$. Indeed, we can introduce a  family of kernels $\{K_B\}_{B\subset G}$ such that 
\bea
P^{\eps,\eps'}(G;q;\{\lambda_j\}) &=& \sum_{B\subset G} q^{k(B)}
\int_{\tilde\lambda_j\in \mathbb{R}}[\prod_jd\tilde\lambda_j]\; \; K_B(\{\lambda_j\}; \{\tilde\lambda_j\})\gamma_B(\tilde\lambda_1)\cdots \gamma_B(\tilde\lambda_n) \cr\cr
& =& 
\sum_{B\subset G} q^{k(B)} \; 
(\gamma_B^1 , \gamma_B^2,\dots, \gamma_B^n )_{\star_{K_B}} (\{\lambda_j\}_{j\geq 1})\,, 
\eea
where $\star_{K_B}$ is a $n$-ary law with kernel $K_B$ which convolutes $n$ functions $\gamma_B^{j} \in \{\alpha,\beta\}$, and which gives as an output a function. The kernel $K$ can be thought of
as a tensor product of delta-distributions which enforces the value 
of the $j$th component $\gamma(\tilde\lambda_j)$ to the corresponding value appearing in the
general formula \eqref{solu}. 
In general, $n$-ary laws are nonassociative (bracketing matters) and noncommutative
(ordering the functions matters). In the above specific instance, because the kernel $K_B$ can be factorized 
\beq
K_B= \delta(\tilde\lambda_1 - \lambda_{1}) \delta(\tilde\lambda_2 - \widehat\lambda_{2;B}(\lambda_1,\lambda_2) )\dots \delta(\tilde\lambda_n - \widehat\lambda_{n;B}(\{\lambda_{l}\}_{l\leq n}) )\,,
\eeq
the above $\star_{K_B}$-product factorizes in binary laws, 
\bea
(((\gamma_B^1 \star_{K_B;1} \gamma_B^2) \star_{K_B;2} \gamma_B^3) \dots ) \star_{K_B;n-1} \gamma_B^n \,,
\eea
where the $\star_{K_B;j} $-product is associated with the kernel 
\bea
\delta\big(\tilde\lambda_1 - \lambda_{1}) \delta(\tilde\lambda_2 - \widehat\lambda_{2;B}(\lambda_1,\lambda_2) )\dots \delta(\tilde\lambda_{j+1} - \widehat\lambda_{j+1;B}(\{\lambda_{l}\}_{l\leq j+1})\big)\,. 
\eea
 In this sense, an ordering-dependent Tutte polynomial
would be realized in terms of a noncommutative algebra of functions 
endowed with several binary laws and it will obey 
\bea
&& 
P^{\eps,\eps'}(G; q;\{\lambda_j\})
=
 \Big[\alpha \star_1 \,P^{\eps,\eps'} ( G/e_{1};\,q;\,
\{\lambda_{ j } \}_{j\ne 1}) \Big] (\lambda_1)
+  \Big[\beta\star'_{1}\, P^{\eps,\eps'}(G-e_{1};\,q;\, \{\lambda_{j}\}_{j \ne 1})
\Big] (\lambda_1)\cr\cr
&& 
 = 
 \alpha( \lambda_{1} ) \,P^{\eps,\eps'} ( G/e_{1};\,q;\,
\{\lambda_{ j } + \epsilon A^{(0)}_{ j 1 }\lambda_{1}\}_{j\ne 1}) 
+  \beta(\lambda_{1})\, P^{\eps,\eps'}(G-e_{1};\,q;\, \{\lambda_{j}
+ \epsilon' A^{(0)}_{j1}\lambda_{1}\}_{j \ne 1})\,.
\eea
up to b.c.. Note that we have made a choice of the kernels of
$\star_1$ and $\star'_1$  in order to write the convolution $\alpha \star_1 \,P^{\eps,\eps'}(G/e_1;-)$ and $\beta \star'_1 \,P^{\eps,\eps'}(G-e_1;-)$ 
in the appropriate form to match the result. We could have chosen 
a different kernel and impose  that it is $P^{\eps,\eps'}(G/e_1;-)\star_1 \alpha$ and $P^{\eps,\eps'}(G-e_1;-)\star_1 \beta$ which give
the correct answer. Thus we can choose to 
convolute the functions $\alpha$ and $\beta$ either on
the left or on the right of the reduced polynomials and this exhibits the richness of this framework.

\subsection{A realization with graph with half-edges}

Graph with half-edges are useful in the context of Quantum Field Theory. 
Half-edges represent external fields or probes used to measure
higher energetic processes. 
Recently this type of graphs have investigated in several contexts
(QFT, matrix models ribbons graphs, tensor graphs) \cite{Krajewski:2008fa,Krajewski:2010pt,avo}. 
In particular, in \cite{avo}, Avohou et al. discussed an extension of Tutte polynomial 
to this type of graphs which satisfies a recurrence relation called
``contraction and cut'' relation. The cut operation on an edge $e \in G$ (denoted $G \vee e$) is intuitive: one deletes the edge and let  two half-edges at its end vertex or vertices. The Tutte polynomial on half-edged graphs satisfies a recurrence rule of the form, for any regular edge $e$, \cite{avo}:
\beq
T_{G} =  T_{G\vee e} + T_{G/e} = t_e^{2}T_{G-e} + T_{G/e}\,, 
\eeq
where $t_e$ is a new variable associated with half-edges obtained
from $e$ after the cut in the graph $G$. One can quickly realize that this is an evaluation of 
the usual Tutte polynomial by its universality theorem. 
What it is important now to mention is the following: 
the half-edges let attached to the graph after cutting 
edges might play precisely the role of a memory of the system.
Indeed, whenever we cut an edge, the data of that edge is not totally
removed from the graph and might be encoded using the presence of
the half-edges. This track deserves to be further investigated.

\bigskip 

\section{Concluding remarks and perspectives}\label{sec:conc}
In this work, motivated by the recursion relation appearing in evaluating the moments of the integrated geometric Brownian motion, we have provided an extension of the Tutte-Fortuin-Kasteleyn polynomial depending on the contraction-deletion ordering. Such dependence emerges naturally from the recursion property proved in \cite{Caravelli1} for generic moments of the integrated geometric Brownian motion. 
Our work entails several interesting consequences. 
First, we have provided a possible generalization of  Tutte polynomial which reduces to the Tutte-FK polynomial in a specific limit of the control parameters. Moreover, we have provided sufficient evidence motivating the study of Tutte polynomials which depend on the ordering of contractions. We have suggested that this generalization might be based on a noncommutative version of the Tutte polynomial, arising from the nonassociative properties of a star product emerging from the ordering of the contraction-deletions. As it is well known, the Tutte polynomial is associated  to  the partition function of the Potts model \cite{Sokal}, and in the context of Quantum Field Theory \cite{Krajewski:2008fa},
the Tutte-Symanzik polynomial in fact provides a  parametric representation of Feynman amplitudes.
Thus, quite intriguingly, our work might relate indirectly stochastic models in one dimensions \cite{Caravelli1,Caravelli2} with the perturbation theory of Quantum Field Theories.  One of the main perspectives of the combinatorial approaches presented here is their extension for the study of the properties of other stochastic models which can be formally written as a perturbation expansion involving the integrated geometric Brownian motion.

\appendix

\renewcommand{\theequation}{\Alph{section}.\arabic{equation}}
\setcounter{equation}{0}

\section{ Proof of Lemma \ref{lemCkl} }
\label{app:lemckl}

Let us recall Lemma \ref{lemCkl}: 
\begin{lemma}
In notations of Theorem \ref{thm1}, for all $k > 1$, 
\beq\label{bin}
C_{k1;B}(G;\eps,\eps') = \eps^{1}_B \sum_{\ell>1}^kA^{(0)}_{\ell 1}  C_{k\ell;B}(G;\eps,\eps') \,.
\eeq
\end{lemma}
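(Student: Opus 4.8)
The plan is to expand the left-hand side $C_{k1;B}$ directly from its defining formula \eqref{ccc}--\eqref{prodQ} specialized to $\ell=1$, and then reorganize the resulting sum over subsets $Q_p\subseteq\{2,\dots,k-1\}$ by grouping terms according to the \emph{smallest} index appearing in $Q_p$. First I would dispose of the base case $k=2$ by hand: there $k=\ell+1$ with $\ell=1$, so $C_{21;B}=\mathcal{A}_{21;B}(Q_0)=\eps^{1}_B A^{(0)}_{21}$, while the right-hand side reduces to the single term $\ell=2$, namely $\eps^{1}_B A^{(0)}_{21}\,C_{22;B}=\eps^{1}_B A^{(0)}_{21}$ since $C_{22;B}=1$; the two agree.

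For $k\geq 3$ I would write $C_{k1;B}=\sum_{Q\subseteq\{2,\dots,k-1\}}\mathcal{A}_{k1;B}(Q)$, where the empty set contributes $\mathcal{A}_{k1;B}(\emptyset)=\eps^{1}_B A^{(0)}_{k1}$, and each nonempty $Q=\{j_1<\dots<j_p\}$ contributes the product in \eqref{prodQ} taken at $\ell=1$, whose innermost factor is exactly $A^{(0)}_{j_1,1}$. The decisive observation is that this innermost factor, together with the leading $\eps^{1}_B$, depends on $Q$ only through $\ell':=\min Q=j_1$. I would therefore group the terms by the value of $\ell'$ (assigning $\ell':=k$ to the empty set, so as to account for the factor $A^{(0)}_{k1}$) and factor $\eps^{1}_B A^{(0)}_{\ell'1}$ out of each group; commutativity of $R$ lets me reorder these scalar factors freely.

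The heart of the argument is then the reindexing identity $\mathcal{A}_{k1;B}(\{\ell'\}\cup Q')=\eps^{1}_B A^{(0)}_{\ell'1}\,\mathcal{A}_{k\ell';B}(Q')$ for every $Q'\subseteq\{\ell'+1,\dots,k-1\}$. Setting $\tilde j_a=j_{a+1}$ I would check that the truncated chain of adjacency-matrix factors of $\mathcal{A}_{k1;B}(\{\ell'\}\cup Q')$ maps precisely onto that of $\mathcal{A}_{k\ell';B}(Q')$: the factor $A^{(j_1-1)}_{j_2,j_1}=A^{(\ell'-1)}_{\tilde j_1,\ell'}$ becomes the new innermost factor of the required form, the superscripts of the remaining $A^{(\cdot)}$ line up unchanged under the shift, and the $\eps_B$-prefactors coincide after removing $\eps^{1}_B$. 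Summing over the fibres then gives $\sum_{Q'\subseteq\{\ell'+1,\dots,k-1\}}\mathcal{A}_{k\ell';B}(Q')=C_{k\ell';B}$ by the very definition \eqref{ccc} (with the boundary cases $\ell'=k$, where $C_{kk;B}=1$, and $\ell'=k-1$, where only $Q'=\emptyset$ survives, fitting the pattern). Summing finally over $\ell'$ from $2$ to $k$ produces $\eps^{1}_B\sum_{\ell>1}^{k}A^{(0)}_{\ell 1}\,C_{k\ell;B}$, which is \eqref{bin}.

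I expect the only genuine difficulty to be the bookkeeping of the index shifts inside $\prod_{i=2}^{p}A^{(j_{i-1}-1)}_{j_i,j_{i-1}}$ and the verification that the superscripts $A^{(\cdot)}$ match correctly under $\tilde j_a=j_{a+1}$; once the grouping by $\min Q$ is fixed, the peeling-off identity is purely mechanical and the summation is immediate.
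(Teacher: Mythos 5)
Your proposal is correct, and it takes a genuinely different route from the paper's. The paper proves \eqref{bin} by induction on $k$: it evaluates the right-hand side at order $k+1$, splits each sum over subsets $Q_p$ according to whether the largest available index $k$ belongs to $Q_p$, peels off the factor $\eps_B^{k}A^{(k-1)}_{(k+1)k}$ so as to invoke the induction hypothesis at order $k$, iterates the same splitting at orders $k-1, k-2,\dots$ to reach $\sum_{\ell=1}^{k}\eps_B^{\ell}A^{(\ell-1)}_{(k+1)\ell}C_{\ell 1}$, and then resums this back into $C_{(k+1)1}$. You instead expand $C_{k1;B}$ once as a sum over all $Q\subseteq\{2,\dots,k-1\}$ and regroup by $\ell'=\min Q$ (assigning $\ell'=k$ to $Q=\emptyset$); the single peeling identity $\mathcal{A}_{k1;B}(\{\ell'\}\cup Q')=\eps^{1}_B A^{(0)}_{\ell'1}\,\mathcal{A}_{k\ell';B}(Q')$ does all the work, and it checks out: under the shift $\tilde j_a=j_{a+1}$ the $\eps_B$-prefactors recombine as $\eps^{1}_B\prod_{a=1}^{p}\eps^{j_a}_B$, and since every adjacency factor carries the superscript $A^{(j_{i-1}-1)}$ determined by its own lower index, the superscripts are untouched by the reindexing, so the truncated chain is exactly $\mathcal{A}_{k\ell';B}(Q')$. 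Because the map $Q\mapsto(\min Q,\,Q\setminus\{\min Q\})$ is a bijection onto pairs $(\ell',Q')$ with $Q'\subseteq\{\ell'+1,\dots,k-1\}$, which is precisely the index set defining $C_{k\ell';B}$ in \eqref{ccc}, summing the fibres gives \eqref{bin} directly, with no induction. Your argument is shorter and exposes the combinatorial content (a partition of the subsets by their minimum, i.e.\ the first ``relay'' edge in the chain from $e_1$ to $e_k$) more transparently; the paper's induction is in effect the telescoped reverse of the same decomposition, its only advantage being that it rehearses the recursion in the form used in the proof of Theorem \ref{thm1}. The one point worth stating explicitly in a final write-up is the bijectivity of the regrouping and the two boundary cases $\ell'=k$ and $\ell'=k-1$, which you already address.
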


Working at fixed graphs $B$ and $G$, we simplify the notations as follows: $C_{k\ell;B}(G;\eps,\eps')=C_{k\ell}$.   We prove the relation \eqref{bin} by recurrence on $k$. At $k=2$, we have
$C_{21}= \eps^{1}_B A^{(0)}_{21}
 = \eps^{1}_B A^{(0)}_{21}  C_{22}$. Let us assume the
result holds at $k$. 
Calculating the r.h.s of \eqref{bin} at the next order $k+1$, we find  
\bea\label{split}
&&
\eps^{1}_B \sum_{\ell>1}^{k+1}A^{(0)}_{\ell1}  C_{(k+1)\ell}  = 
\eps^{1}_BA^{(0)}_{(k+1)1}  + \eps^{1}_BA^{(0)}_{k1}C_{(k+1)k} 
+ \eps^{1}_BA^{(0)}_{(k-1)1}C_{(k+1)(k-1)}\\
&&+ 
\eps^{1}_B \sum_{\ell>1}^{k-2}A^{(0)}_{\ell1}  
 \sum_{p=0}^{(k+1)-\ell-1}
\sum_{\stackrel{Q_p \subseteq \{\ell +1,\dots, (k+1)-1\}}{ |Q_p|=p }}
\mathcal{A}_{(k+1)\ell;B} (Q_p) \cr\cr\cr
&&=\eps^{1}_BA^{(0)}_{(k+1)1}   + \eps^{1}_BA^{(0)}_{k1}C_{(k+1)k} + \eps^{1}_BA^{(0)}_{(k-1)1}C_{(k+1)(k-1)}\crcr
&&+ 
\eps^{1}_B \sum_{\ell>1}^{k-2}A^{(0)}_{\ell1}  
 \sum_{p=0}^{k-\ell}
\Big[ 
\sum_{\stackrel{Q_p \subseteq \{\ell +1,\dots, k\}}{ |Q_p|=p \;\text{and}\; k\in Q_p }}  + 
\sum_{\stackrel{Q_p \subseteq \{\ell +1,\dots, k\}}{ |Q_p|=p \; \text{and}\; k\notin Q_p }}  \Big]  
\mathcal{A}_{(k+1)\ell;B} (Q_p) \cr\cr\cr
&&
=\eps^{1}_BA^{(0)}_{(k+1)1}   + \eps^{1}_BA^{(0)}_{k1}C_{(k+1)k} + \eps^{1}_BA^{(0)}_{(k-1)1}C_{(k+1)(k-1)}\crcr
&&+ 
\eps^{1}_B \sum_{\ell>1}^{k-2}A^{(0)}_{\ell1}  
 \sum_{p=0}^{k-\ell}
\Big[ 
\sum_{\stackrel{Q_p \subseteq \{\ell +1,\dots, k\}}{ |Q_p|=p \;\text{and}\; k\in Q_p }}  + 
\sum_{\stackrel{Q_p \subseteq \{\ell +1,\dots, k-1\}}{ |Q_p|=p }}  \Big]  
\mathcal{A}_{(k+1)\ell;B} (Q_p) \,. 
\nonumber
\eea
We expand  $C_{(k+1)(k-1)}$ as
\beq
C_{(k+1)(k-1)} = \mathcal{A}_{(k+1)(k-1)}(Q_0)+ 
\mathcal{A}_{(k+1)(k-1)}(\{k\}) = \eps^{k-1}_B (A^{(k-2)}_{(k+1)(k-1)}
+ \eps^{k}_B A^{(k-1)}_{(k+1)k}A^{(k-2)}_{k(k-1)})\,.
\eeq 
For the terms in  \eqref{split} such that $k \in Q_p$, which implies $p\geq 1$, because $j_{p}^Q\leq k $, we must have $j^Q_p=k$ and so we can write
\beq\label{extr}
\mathcal{A}_{(k+1)\ell;B} (Q_p)  = 
\eps_B^{k}A^{(k-1)}_{(k+1) k}  \Big(\eps^\ell_B \prod_{a=1}^{p-1} \eps^{j^Q_a}_B \Big) 
\Big[A^{(j^Q_{p-1}-1)}_{k j^Q_{p-1}}\Big(\prod_{i=2}^{p-1} A^{(j^Q_{i-1}-1)}_{j^Q_i j^Q_{i-1}} \Big)A^{(\ell-1)}_{ j^Q_1 \ell} \Big].
\eeq
The sum over the terms in \eqref{split} with $k\in Q_{p\geq 1}$ yields, after adjusting the variable $p-1 \to p$, 
\bea
&&   \eps^{1}_BA^{(0)}_{k1}C_{(k+1)k}  +  \eps^{k}_B\eps^{1}_BA^{(0)}_{(k-1)1}  A^{(k-1)}_{(k+1)k}A^{(k-2)}_{k(k-1)} \crcr
&&+ 
\eps^{1}_B \sum_{\ell>1}^{k-2}A^{(0)}_{\ell1}  
 \sum_{p=1}^{k-\ell}
\Big[ 
\sum_{\stackrel{Q_p \subseteq \{\ell +1,\dots, k\}}{ |Q_p|=p \;\text{and}\; k\in Q_p }}  \Big]  
\eps_B^{k}A^{(k-1)}_{(k+1) k}  \Big(\eps^\ell_B \prod_{a=1}^{p-1} \eps^{j^Q_a}_B \Big) 
\Big[A^{(j^Q_{p-1}-1)}_{k j^Q_{p-1}}\Big(\prod_{i=2}^{p-1} A^{(j^Q_{i-1}-1)}_{j^Q_i j^Q_{i-1}} \Big)A^{(\ell-1)}_{ j^Q_1 \ell} \Big]
\cr\cr\cr
&& =  \eps^{1}_BA^{(0)}_{k1} \eps^{k}_B A^{(k-1)}_{(k+1)k}   +  \eps^{k}_B\eps^{1}_BA^{(0)}_{(k-1)1} A^{(k-1)}_{(k+1)k}A^{(k-2)}_{k(k-1)} \cr\cr
&&+ 
 \eps_B^{k}A^{(k-1)}_{(k+1) k}\;  \eps^{1}_B \sum_{\ell>1}^{k-2}A^{(0)}_{\ell1}  
 \sum_{p=1}^{k-\ell}
\Big[ 
\sum_{\stackrel{Q_p \subseteq \{\ell +1,\dots, k-1\}}{ |Q_p|=p -1 }}  \Big]  
 \Big(\eps^\ell_B \prod_{a=1}^{p-1} \eps^{j^Q_a}_B \Big) 
\Big[A^{(j^Q_{p-1}-1)}_{k j^Q_{p-1}}\Big(\prod_{i=2}^{p-1} A^{(j^Q_{i-1}-1)}_{j^Q_i j^Q_{i-1}} \Big)A^{(\ell-1)}_{ j^Q_1 \ell} \Big]
\cr\cr
&& =  \eps^{1}_BA^{(0)}_{k1} \eps^{k}_B A^{(k-1)}_{(k+1)k}   +  \eps^{k}_B \eps^{1}_BA^{(0)}_{(k-1)1} A^{(k-1)}_{(k+1)k}A^{(k-2)}_{k(k-1)}  \cr\cr
&&+
 \eps_B^{k}A^{(k-1)}_{(k+1) k}\;  \eps^{1}_B \sum_{\ell>1}^{k-2}A^{(0)}_{\ell1}  
 \sum_{p=0}^{k-\ell-1}
\sum_{\stackrel{Q_p \subseteq \{\ell +1,\dots, k-1\}}{ |Q_p|=p }}   
 \mathcal{A}_{k\ell;B} (Q_p) 
 = \eps_B^{k}A^{(k-1)}_{(k+1) k}\;  \eps^{1}_B \sum_{l>1}^{k}A^{(0)}_{\ell 1}  
C_{k\ell}  \crcr
&& 
=\eps_B^{k}A^{(k-1)}_{(k+1) k}\;   C_{k1}\,,
\label{init5}
\eea
where in the last line, use has been made of the recurrence hypothesis. 
Let us concentrate on the second type of terms $k\notin Q_p$ in
 \eqref{split} that we
write, because there is no subsets of size $k-\ell>0$ (omitting
$\eps^{1}_BA^{(0)}_{(k+1)1}$)
\bea
&&
 \eps^{k-1}_B  \eps^{1}_BA^{(0)}_{(k-1)1} A^{(k-2)}_{(k+1)(k-1)} + 
\eps^{1}_B \sum_{\ell >1}^{k-2}A^{(0)}_{\ell 1}  
 \sum_{p=0}^{k-\ell-1}
\Big[ 
\sum_{\stackrel{Q_p \subseteq \{\ell +1,\dots, k-1\}}{ |Q_p|=p }}  \Big]  
\mathcal{A}_{(k+1)\ell;B} (Q_p) \cr\cr
&&
=  \eps^{k-1}_B  \eps^{1}_BA^{(0)}_{(k-1)1} A^{(k-2)}_{(k+1)(k-1)} + 
\eps^{1}_B A^{(0)}_{(k-2) 1}  (\mathcal{A}_{(k+1)(k-2);B} (Q_0)+
 \mathcal{A}_{(k+1)(k-2);B} (\{k-1\}))\cr\cr
&& 
+ 
\eps^{1}_B \sum_{\ell >1}^{k-3}A^{(0)}_{\ell 1}  
 \sum_{p=0}^{k-\ell-1}
\Big[ 
\sum_{\stackrel{Q_p \subseteq \{\ell +1,\dots, k-1\}}{ |Q_p|=p }}  \Big]  
\mathcal{A}_{(k+1)\ell;B} (Q_p) 
\cr\cr
&&
=  \eps^{k-1}_B  \eps^{1}_BA^{(0)}_{(k-1)1} A^{(k-2)}_{(k+1)(k-1)}
 + 
\eps^{1}_B A^{(0)}_{(k-2) 1}  (\eps_B^{k-2} A^{(k-1)}_{(k+1)(k-2)} +
\eps^{k-2}_B \eps^{k-1}_B A^{(k-2)}_{(k+1)(k-1)} A^{(k-3)}_{(k-1)(k-2)} )
\cr\cr
&& 
+ 
\eps^{1}_B \sum_{\ell >1}^{k-3}A^{(0)}_{\ell 1}  
 \sum_{p=0}^{k-\ell-1}
\Big[ 
\sum_{\stackrel{Q_p \subseteq \{\ell +1,\dots, k-1\}}{ |Q_p|=p }}  \Big]  
\mathcal{A}_{(k+1)\ell;B} (Q_p)   \,. 
\label{rest}
\eea
Now we iterate the same procedure and split the sum over subsets
$Q_p \subset\{\ell+1,\dots, k-1\}$, among the terms containing
$k-1$ and those which do not. Using the same decomposition as
in \eqref{extr} followed by the expansion \eqref{init5}, we obtain
by summing those terms containing $k-1$, and a few algebra  from \eqref{rest}:
\bea
&& \eps^{k-1}_B A^{(k-2)}_{(k+1)(k-1)}\Big[ \eps^{1}_BA^{(0)}_{(k-1)1}
 +
\eps^{1}_B\eps^{k-2}_B   A^{(0)}_{(k-2) 1}  A^{(k-3)}_{(k-1)(k-2)} 
 +
\eps^{1}_B \sum_{\ell >1}^{k-3}A^{(0)}_{\ell 1}
\sum_{p=0}^{k-\ell -2}  \sum_{\stackrel{Q_p \subseteq \{\ell +1,\dots, k-2\}}{ |Q_p|=p }} \mathcal{A}_{(k-1)\ell;B} \Big] 
\cr\cr
&&
= 
 \eps^{k-1}_B A^{(k-2)}_{(k+1)(k-1)}\Big[
 \eps^{1}_BA^{(0)}_{(k-1)1} C_{(k-1)(k-1)}
+ 
 \eps^{1}_BA^{(0)}_{(k-2)1} C_{(k-1)(k-2)}
+
\eps^{1}_B \sum_{\ell >1}^{k-3}A^{(0)}_{\ell 1}C_{(k-1)\ell}
 \Big] 
\cr\cr
&&
= 
 \eps^{k-1}_B A^{(k-2)}_{(k+1)(k-1)}\Big[
\eps^{1}_B \sum_{\ell>1}^{k-1}
 A^{(0)}_{\ell 1} C_{(k-1)\ell}
 \Big] 
= \eps^{k-1}_B A^{(k-2)}_{(k+1)(k-1)} C_{(k-1)1}\,,
\eea
where again we have used the recurrence hypothesis,  this
time, at order $k-1$. The procedure can be pursued until no more
terms are left in the sum, and one gets as an upshot
\bea
&&
\eps^{1}_B \sum_{\ell>1}^{k+1}A^{(0)}_{\ell1}  C_{(k+1)\ell}   = 
\eps_B^{k}A^{(k-1)}_{(k+1) k}\;   C_{k1} 
 +  \eps^{k-1}_B A^{(k-2)}_{(k+1)(k-1)} C_{(k-1)1}
+ \dots +  
 \eps^{2}_B A^{(1)}_{(k+1)2} C_{21}
+\eps^{1}_BA^{(0)}_{(k+1)1} C_{11} \cr\cr
&&
=\sum_{\ell=1}^{k} \eps_B^{\ell }A^{(\ell -1)}_{(k+1) \ell }\;   C_{\ell 1} \,.
\eea
We substitute the value of $C_{\ell 1}$ in the above expression 
and obtain
\bea
&&
\eps^{1}_B \sum_{\ell>1}^{k+1}A^{(0)}_{\ell1}  C_{(k+1)\ell}   
= \eps^{1}_BA^{(0)}_{(k+1)1}
 + \sum_{\ell=2}^{k} \; 
\sum_{p=0}^{\ell-2}
\sum_{\stackrel{Q_p \subseteq \{2,\dots, \ell -1\}}{ |Q_p|=p }}
\eps_B^{\ell }A^{(\ell -1)}_{(k+1) \ell }\, 
\mathcal{A}_{\ell1;B} (Q_p) \cr\cr
&&
= \eps^{1}_BA^{(0)}_{(k+1)1}
 +  \sum_{p=0}^{k-2}
\sum_{\stackrel{Q_p \subseteq \{2,\dots, k -1\}}{ |Q_p|=p }}
\sum_{\ell=\max(p+2,j^Q_{p}+1)}^{k} \eps_B^{\ell }A^{(\ell -1)}_{(k+1) \ell }\, 
\mathcal{A}_{\ell1;B} (Q_p) \,.
\eea
Observing that $2\leq j^Q_1< \dots < j^Q_p $, then 
$j^Q_p \geq p+2$ such that $\max(p+2,j^Q_{p}+1)=j^Q_{p}+1$. 
Furthermore, because $j^Q_p < \ell $, we re-express the above formula as
\bea
&&
\eps^{1}_B \sum_{\ell>1}^{k+1}A^{(0)}_{\ell1}  C_{(k+1)\ell} =  
\eps^{1}_BA^{(0)}_{(k+1)1}
 +  \sum_{p=0}^{k-2}
\sum_{\stackrel{Q_p \subseteq \{2,\dots, k -1\}}{ |Q_p|=p }}
\sum_{\ell=j^Q_{p}+1}^{k}  
\mathcal{A}_{(k+1)1;B} (Q_p \cup \{\ell\}) \cr\cr
&&= 
\eps^{1}_BA^{(0)}_{(k+1)1}
 +  \sum_{p=0}^{k-2}
\sum_{\stackrel{Q_{p+1} \subseteq \{2,\dots, k\}}{ |Q_{p+1}|=p+1 }} 
\mathcal{A}_{(k+1)1;B} (Q_{p+1}) = 
\eps^{1}_BA^{(0)}_{(k+1)1}
 +  \sum_{p=1}^{k-1}
\sum_{\stackrel{Q_{p} \subseteq \{2,\dots, k\}}{ |Q_{p}|=p }} 
\mathcal{A}_{(k+1)1;B} (Q_{p}) \cr\cr
&& 
 =  \sum_{p=0}^{k-1}
\sum_{\stackrel{Q_{p} \subseteq \{2,\dots, k\}}{ |Q_{p}|=p }} 
\mathcal{A}_{(k+1)1;B} (Q_{p}) = C_{(k+1)1}. 
\eea
\qed

\section{A worked out example}\label{app:example}

Consider the graph $G$ of Fig.\ref{fig:ex}, with edge set 
$\{e_{i}\}_{i=1,\dots, 5}$. We call $\{e_1,e_3,e_5\}=E_1$ 
the edge set of $G_1$, and $\{e_2,e_4\} = E_2$, the edge set of $G_2$. 
Note that the subgraphs $G_1$ and $G_2$ are symmetric under relabelling, so it does not matter to claim which edge is which 
in that figure. 

Let us pick the sequence $(e_1,e_2,e_3,e_4,e_5)$ in which 
we want to perform the contraction deletion. 
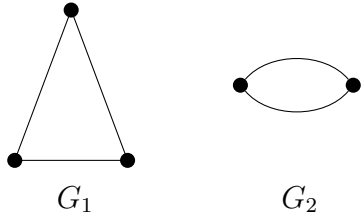
\begin{figure}[h]
\center
\begin{tikzpicture}
\draw (0,0) -- (1.5,0) -- (0.75,2) -- (0,0) ;

 \fill[fill=black] (0,0) circle (0.1);
 \fill[fill=black] (1.5,0) circle (0.1);
 \fill[fill=black] (0.75,2) circle (0.1);

\fill[fill=black] (3,1) circle (0.1);
\fill[fill=black] (4.5,1) circle (0.1);

 \draw[shorten <= 0.1cm, shorten >= 0.1cm] (3,1) to[out=50, in=130] (4.5,1);
\draw[shorten <= 0.1cm, shorten >= 0.1cm] (3,1) to[out=-50, in=-130] (4.5,1);

\end{tikzpicture}
\put(-115,-15){$G_1$}
\put(-30,-15){$G_2$}
\caption{A disconnected graph $G=G_1 \cup G_2$} 
\label{fig:ex}
\end{figure}
Now, we can apply the state sum to find the polynomial. 
We will use a graphical representation to compute the 
$2^5$ terms of the sum and will simplify the notations. We have 
\bea
&& 
P^{\eps,\eps'}(G;q;\{\lambda_j\})
 = \sum_{B \subset G} q^{k(B)} \; \prod_{j=1}^n \gamma_B (\widehat\lambda_{j} \big(\{\lambda_l\}_{l\leq j}) \big)  \cr\cr
&&= 
q^{2}\prod_{j=1}^{5}
 \gamma_{\begin{tikzpicture}
\draw (0,0) -- (1.5/4,0) -- (0.75/4,2/4) -- (0,0) ;
 \fill[fill=black] (0,0) circle (0.1/4);
 \fill[fill=black] (1.5/4,0) circle (0.1/4);
 \fill[fill=black] (0.75/4,2/4) circle (0.1/4);
\fill[fill=black] (3/4,1/4) circle (0.1/4);
\fill[fill=black] (4.5/4,1/4) circle (0.1/4);
 \draw[shorten <= 0.1cm, shorten >= 0.1cm] (3/4,1/4) to[out=50, in=130] (4.5/4,1/4);
\draw[shorten <= 0.1cm, shorten >= 0.1cm] (3/4,1/4) to[out=-50, in=-130] (4.5/4,1/4);
\end{tikzpicture}} 
(\widehat\lambda_{j} \big(\{\lambda_l\}_{l\leq j}) \big) 
 + q^{2}\prod_{j=1}^{5}
 \gamma_{\begin{tikzpicture}
\draw (0,0) -- (1.5/4,0) -- (0.75/4,2/4) -- (0,0) ;
 \fill[fill=black] (0,0) circle (0.1/4);
 \fill[fill=black] (1.5/4,0) circle (0.1/4);
 \fill[fill=black] (0.75/4,2/4) circle (0.1/4);
\fill[fill=black] (3/4,1/4) circle (0.1/4);
\fill[fill=black] (4.5/4,1/4) circle (0.1/4);
\draw[shorten <= 0.1cm, shorten >= 0.1cm] (3/4,1/4) to[out=-50, in=-130] (4.5/4,1/4);
\end{tikzpicture}} \;\; 
(\widehat\lambda_{j} \big(\{\lambda_l\}_{l\leq j}) \big) \cr\cr
&&
+  
 q^{2}\prod_{j=1}^{5}
 \gamma_{\begin{tikzpicture}
\draw (0,0) -- (1.5/4,0) -- (0.75/4,2/4) -- (0,0) ;
 \fill[fill=black] (0,0) circle (0.1/4);
 \fill[fill=black] (1.5/4,0) circle (0.1/4);
 \fill[fill=black] (0.75/4,2/4) circle (0.1/4);
\fill[fill=black] (3/4,1/4) circle (0.1/4);
\fill[fill=black] (4.5/4,1/4) circle (0.1/4);
 \draw[shorten <= 0.1cm, shorten >= 0.1cm] (3/4,1/4) to[out=50, in=130] (4.5/4,1/4);
\end{tikzpicture}} 
(\widehat\lambda_{j} \big(\{\lambda_l\}_{l\leq j}) \big) 
 + q^{3}\prod_{j=1}^{5}
 \gamma_{\begin{tikzpicture}
\draw (0,0) -- (1.5/4,0) -- (0.75/4,2/4) -- (0,0) ;
 \fill[fill=black] (0,0) circle (0.1/4);
 \fill[fill=black] (1.5/4,0) circle (0.1/4);
 \fill[fill=black] (0.75/4,2/4) circle (0.1/4);
\fill[fill=black] (3/4,1/4) circle (0.1/4);
\fill[fill=black] (4.5/4,1/4) circle (0.1/4);
\end{tikzpicture}} \;\; 
(\widehat\lambda_{j} \big(\{\lambda_l\}_{l\leq j}) \big) 
\cr\cr
&&
+ q^{2}\prod_{j=1}^{5}
 \gamma_{\begin{tikzpicture}
\draw (0,0) -- (1.5/4,0) -- (0.75/4,2/4) ; 
 \fill[fill=black] (1.5/4,0) circle (0.1/4);
 \fill[fill=black] (0.75/4,2/4) circle (0.1/4);
\fill[fill=black] (3/4,1/4) circle (0.1/4);
\fill[fill=black] (4.5/4,1/4) circle (0.1/4);
 \draw[shorten <= 0.1cm, shorten >= 0.1cm] (3/4,1/4) to[out=50, in=130] (4.5/4,1/4);
\draw[shorten <= 0.1cm, shorten >= 0.1cm] (3/4,1/4) to[out=-50, in=-130] (4.5/4,1/4);
\end{tikzpicture}} 
(\widehat\lambda_{j} \big(\{\lambda_l\}_{l\leq j}) \big) 
 + q^{2}\prod_{j=1}^{5}
 \gamma_{\begin{tikzpicture}
\draw (0,0) -- (1.5/4,0) -- (0.75/4,2/4);  
 \fill[fill=black] (0,0) circle (0.1/4);
 \fill[fill=black] (1.5/4,0) circle (0.1/4);
 \fill[fill=black] (0.75/4,2/4) circle (0.1/4);
\fill[fill=black] (3/4,1/4) circle (0.1/4);
\fill[fill=black] (4.5/4,1/4) circle (0.1/4);
\draw[shorten <= 0.1cm, shorten >= 0.1cm] (3/4,1/4) to[out=-50, in=-130] (4.5/4,1/4);
\end{tikzpicture}} \;\; 
(\widehat\lambda_{j} \big(\{\lambda_l\}_{l\leq j}) \big) \cr\cr
&&
+  
 q^{2}\prod_{j=1}^{5}
 \gamma_{\begin{tikzpicture}
\draw (0,0) -- (1.5/4,0) -- (0.75/4,2/4); 
 \fill[fill=black] (0,0) circle (0.1/4);
 \fill[fill=black] (1.5/4,0) circle (0.1/4);
 \fill[fill=black] (0.75/4,2/4) circle (0.1/4);
\fill[fill=black] (3/4,1/4) circle (0.1/4);
\fill[fill=black] (4.5/4,1/4) circle (0.1/4);
 \draw[shorten <= 0.1cm, shorten >= 0.1cm] (3/4,1/4) to[out=50, in=130] (4.5/4,1/4);
\end{tikzpicture}} 
(\widehat\lambda_{j} \big(\{\lambda_l\}_{l\leq j}) \big) 
 + q^{3}\prod_{j=1}^{5}
 \gamma_{\begin{tikzpicture}
\draw (0,0) -- (1.5/4,0) -- (0.75/4,2/4); 
 \fill[fill=black] (0,0) circle (0.1/4);
 \fill[fill=black] (1.5/4,0) circle (0.1/4);
 \fill[fill=black] (0.75/4,2/4) circle (0.1/4);
\fill[fill=black] (3/4,1/4) circle (0.1/4);
\fill[fill=black] (4.5/4,1/4) circle (0.1/4);
\end{tikzpicture}} \;\; 
(\widehat\lambda_{j} \big(\{\lambda_l\}_{l\leq j}) \big) 
\cr\cr
&& 
+\Bigg\{ {\mbox{and the like obtained by replacing  
\begin{tikzpicture}
\draw (0,0) -- (1.5/4,0) -- (0.75/4,2/4);
 \fill[fill=black] (0,0) circle (0.1/4);
 \fill[fill=black] (1.5/4,0) circle (0.1/4);
 \fill[fill=black] (0.75/4,2/4) circle (0.1/4);
\end{tikzpicture} by  \begin{tikzpicture}
\draw (1.5/4,0) -- (0.75/4,2/4) -- (0,0) ;
 \fill[fill=black] (0,0) circle (0.1/4);
 \fill[fill=black] (1.5/4,0) circle (0.1/4);
 \fill[fill=black] (0.75/4,2/4) circle (0.1/4);
\end{tikzpicture} 
,  \begin{tikzpicture}
\draw (1.5/4,0) -- (0,0) --  (0.75/4,2/4);
 \fill[fill=black] (0,0) circle (0.1/4);
 \fill[fill=black] (1.5/4,0) circle (0.1/4);
 \fill[fill=black] (0.75/4,2/4) circle (0.1/4);
\end{tikzpicture}  }} \Bigg\}   \cr\cr
&&
+ q^{3}\prod_{j=1}^{5}
 \gamma_{\begin{tikzpicture}
\draw (0,0) -- (1.5/4,0);   
 \fill[fill=black] (0,0) circle (0.1/4);
 \fill[fill=black] (1.5/4,0) circle (0.1/4);
 \fill[fill=black] (0.75/4,2/4) circle (0.1/4);
\fill[fill=black] (3/4,1/4) circle (0.1/4);
\fill[fill=black] (4.5/4,1/4) circle (0.1/4);
 \draw[shorten <= 0.1cm, shorten >= 0.1cm] (3/4,1/4) to[out=50, in=130] (4.5/4,1/4);
\draw[shorten <= 0.1cm, shorten >= 0.1cm] (3/4,1/4) to[out=-50, in=-130] (4.5/4,1/4);
\end{tikzpicture}} 
(\widehat\lambda_{j} \big(\{\lambda_l\}_{l\leq j}) \big) 
 + q^{3}\prod_{j=1}^{5}
 \gamma_{\begin{tikzpicture}
\draw (0,0) -- (1.5/4,0); 
 \fill[fill=black] (0,0) circle (0.1/4);
 \fill[fill=black] (1.5/4,0) circle (0.1/4);
 \fill[fill=black] (0.75/4,2/4) circle (0.1/4);
\fill[fill=black] (3/4,1/4) circle (0.1/4);
\fill[fill=black] (4.5/4,1/4) circle (0.1/4);
\draw[shorten <= 0.1cm, shorten >= 0.1cm] (3/4,1/4) to[out=-50, in=-130] (4.5/4,1/4);
\end{tikzpicture}} \;\; 
(\widehat\lambda_{j} \big(\{\lambda_l\}_{l\leq j}) \big) \cr\cr
&&
+  
 q^{3}\prod_{j=1}^{5}
 \gamma_{\begin{tikzpicture}
\draw (0,0) -- (1.5/4,0);   
 \fill[fill=black] (0,0) circle (0.1/4);
 \fill[fill=black] (1.5/4,0) circle (0.1/4);
 \fill[fill=black] (0.75/4,2/4) circle (0.1/4);
\fill[fill=black] (3/4,1/4) circle (0.1/4);
\fill[fill=black] (4.5/4,1/4) circle (0.1/4);
 \draw[shorten <= 0.1cm, shorten >= 0.1cm] (3/4,1/4) to[out=50, in=130] (4.5/4,1/4);
\end{tikzpicture}} 
(\widehat\lambda_{j} \big(\{\lambda_l\}_{l\leq j}) \big) 
 + q^{4}\prod_{j=1}^{5}
 \gamma_{\begin{tikzpicture}
\draw (0,0) -- (1.5/4,0);  
 \fill[fill=black] (0,0) circle (0.1/4);
 \fill[fill=black] (1.5/4,0) circle (0.1/4);
 \fill[fill=black] (0.75/4,2/4) circle (0.1/4);
\fill[fill=black] (3/4,1/4) circle (0.1/4);
\fill[fill=black] (4.5/4,1/4) circle (0.1/4);
\end{tikzpicture}} \;\; 
(\widehat\lambda_{j} \big(\{\lambda_l\}_{l\leq j}) \big) 
\cr\cr
&& 
+ \Big\{ {\mbox{and the like obtained by replacing  \begin{tikzpicture}
\draw (0,0) -- (1.5/4,0);   
 \fill[fill=black] (0,0) circle (0.1/4);
 \fill[fill=black] (1.5/4,0) circle (0.1/4);
 \fill[fill=black] (0.75/4,2/4) circle (0.1/4);
\end{tikzpicture} by  \begin{tikzpicture}
\draw (0.75/4,2/4) -- (0,0) ;
 \fill[fill=black] (0,0) circle (0.1/4);
 \fill[fill=black] (1.5/4,0) circle (0.1/4);
 \fill[fill=black] (0.75/4,2/4) circle (0.1/4);
\end{tikzpicture} 
,  \begin{tikzpicture}
\draw (1.5/4,0) --  (0.75/4,2/4);
 \fill[fill=black] (0,0) circle (0.1/4);
 \fill[fill=black] (1.5/4,0) circle (0.1/4);
 \fill[fill=black] (0.75/4,2/4) circle (0.1/4);
\end{tikzpicture}  }} \Big\} \cr\cr
&& 
+ q^{4}\prod_{j=1}^{5}
 \gamma_{\begin{tikzpicture}
 \fill[fill=black] (0,0) circle (0.1/4);
 \fill[fill=black] (1.5/4,0) circle (0.1/4);
 \fill[fill=black] (0.75/4,2/4) circle (0.1/4);
\fill[fill=black] (3/4,1/4) circle (0.1/4);
\fill[fill=black] (4.5/4,1/4) circle (0.1/4);
 \draw[shorten <= 0.1cm, shorten >= 0.1cm] (3/4,1/4) to[out=50, in=130] (4.5/4,1/4);
\draw[shorten <= 0.1cm, shorten >= 0.1cm] (3/4,1/4) to[out=-50, in=-130] (4.5/4,1/4);
\end{tikzpicture}} 
(\widehat\lambda_{j} \big(\{\lambda_l\}_{l\leq j}) \big) 
 + q^{4}\prod_{j=1}^{5}
 \gamma_{\begin{tikzpicture}
 \fill[fill=black] (0,0) circle (0.1/4);
 \fill[fill=black] (1.5/4,0) circle (0.1/4);
 \fill[fill=black] (0.75/4,2/4) circle (0.1/4);
\fill[fill=black] (3/4,1/4) circle (0.1/4);
\fill[fill=black] (4.5/4,1/4) circle (0.1/4);
\draw[shorten <= 0.1cm, shorten >= 0.1cm] (3/4,1/4) to[out=-50, in=-130] (4.5/4,1/4);
\end{tikzpicture}} \;\; 
(\widehat\lambda_{j} \big(\{\lambda_l\}_{l\leq j}) \big) \cr\cr
&&
+  
 q^{4}\prod_{j=1}^{5}
 \gamma_{\begin{tikzpicture}
 \fill[fill=black] (0,0) circle (0.1/4);
 \fill[fill=black] (1.5/4,0) circle (0.1/4);
 \fill[fill=black] (0.75/4,2/4) circle (0.1/4);
\fill[fill=black] (3/4,1/4) circle (0.1/4);
\fill[fill=black] (4.5/4,1/4) circle (0.1/4);
 \draw[shorten <= 0.1cm, shorten >= 0.1cm] (3/4,1/4) to[out=50, in=130] (4.5/4,1/4);
\end{tikzpicture}} 
(\widehat\lambda_{j} \big(\{\lambda_l\}_{l\leq j}) \big) 
 + q^{5}\prod_{j=1}^{5}
 \gamma_{\begin{tikzpicture}
 \fill[fill=black] (0,0) circle (0.1/4);
 \fill[fill=black] (1.5/4,0) circle (0.1/4);
 \fill[fill=black] (0.75/4,2/4) circle (0.1/4);
\fill[fill=black] (3/4,1/4) circle (0.1/4);
\fill[fill=black] (4.5/4,1/4) circle (0.1/4);
\end{tikzpicture}} \;\; 
(\widehat\lambda_{j} \big(\{\lambda_l\}_{l\leq j}) \big) \,.
\eea
Let us concentrate on the first term. Because the subgraph 
$B=\begin{tikzpicture}
\draw (0,0) -- (1.5/4,0) -- (0.75/4,2/4) -- (0,0) ;
 \fill[fill=black] (0,0) circle (0.1/4);
 \fill[fill=black] (1.5/4,0) circle (0.1/4);
 \fill[fill=black] (0.75/4,2/4) circle (0.1/4);
\fill[fill=black] (3/4,1/4) circle (0.1/4);
\fill[fill=black] (4.5/4,1/4) circle (0.1/4);
 \draw[shorten <= 0.1cm, shorten >= 0.1cm] (3/4,1/4) to[out=50, in=130] (4.5/4,1/4);
\draw[shorten <= 0.1cm, shorten >= 0.1cm] (3/4,1/4) to[out=-50, in=-130] (4.5/4,1/4);
\end{tikzpicture}$ contains all edges, we can write (working at fixed
subgraph $B$, we omit it in the notations): 
\bea
q^{2}\prod_{j=1}^{5}
 \gamma_{\begin{tikzpicture}
\draw (0,0) -- (1.5/4,0) -- (0.75/4,2/4) -- (0,0) ;
 \fill[fill=black] (0,0) circle (0.1/4);
 \fill[fill=black] (1.5/4,0) circle (0.1/4);
 \fill[fill=black] (0.75/4,2/4) circle (0.1/4);
\fill[fill=black] (3/4,1/4) circle (0.1/4);
\fill[fill=black] (4.5/4,1/4) circle (0.1/4);
 \draw[shorten <= 0.1cm, shorten >= 0.1cm] (3/4,1/4) to[out=50, in=130] (4.5/4,1/4);
\draw[shorten <= 0.1cm, shorten >= 0.1cm] (3/4,1/4) to[out=-50, in=-130] (4.5/4,1/4);
\end{tikzpicture}} 
(\widehat\lambda_{j} \big(\{\lambda_l\}_{l\leq j}) \big) 
 = q^{2}\prod_{j=1}^{5}
 \alpha_{\begin{tikzpicture}
\draw (0,0) -- (1.5/4,0) -- (0.75/4,2/4) -- (0,0) ;
 \fill[fill=black] (0,0) circle (0.1/4);
 \fill[fill=black] (1.5/4,0) circle (0.1/4);
 \fill[fill=black] (0.75/4,2/4) circle (0.1/4);
\fill[fill=black] (3/4,1/4) circle (0.1/4);
\fill[fill=black] (4.5/4,1/4) circle (0.1/4);
 \draw[shorten <= 0.1cm, shorten >= 0.1cm] (3/4,1/4) to[out=50, in=130] (4.5/4,1/4);
\draw[shorten <= 0.1cm, shorten >= 0.1cm] (3/4,1/4) to[out=-50, in=-130] (4.5/4,1/4);
\end{tikzpicture}} 
(\widehat\lambda_{j} \big(\{\lambda_l\}_{l\leq j}) \big) \,, 
\eea
with 
\bea
\widehat\lambda_{1}  &=& \lambda_1\,;
\cr\cr
\widehat\lambda_{2}&=&
 C_{2 1;B}(G;\eps,\eps')\lambda_1 +\lambda_2 = \eps \Big(A^{(0)}_{21} = 0
\Big) \lambda_1 +\lambda_2 =  \lambda_2 \,; 
\cr\cr
\widehat\lambda_{3}&=& 
C_{3 1}(G;\eps,\eps')\lambda_1 +C_{32}(G;\eps,\eps')\lambda_2
+ \lambda_3
=
 C_{3 1}(G;\eps,\eps')\lambda_1 +\eps \Big(A^{(1)}_{32} = 0\Big)\lambda_2 + \lambda_3 \cr\cr
&=&
\eps \Big(A^{(0)}_{31}=1\Big) \lambda_1 + \lambda_3 = \eps  \lambda_1 + \lambda_3\,, 
\cr\cr
\mbox{where}&&
C_{3 1}(G;\eps,\eps') = 
\mathcal{A}_{31} (\emptyset) +
\mathcal{A}_{31} (\{2\}) 
= 
\eps A^{(0)}_{31} + \eps^2\Big(A^{(1)}_{32}=0\Big)\Big(A^{(0)}_{21} 
= 0\Big) = \eps A^{(0)}_{31}\,; 
 \cr\cr 
\widehat\lambda_{4}&=& 
(C_{41}(G;\eps,\eps')=0)\lambda_1 
+ 
\eps (A^{(1)}_{42}=1) \lambda_2 
+
\eps (A^{0}_{43}=0)\lambda_3 
+ 
\lambda_4 =
 \eps  \lambda_2 + 
\lambda_4 \,, 
 \cr\cr
\mbox{where}&&
C_{41}(G;\eps,\eps') = 
\mathcal{A}_{41} (\emptyset) 
+ 
\mathcal{A}_{41} (\{2\}) 
+ 
\mathcal{A}_{41} (\{3\}) \cr\cr
&& = 
\eps (A^{(0)}_{41}=0) 
+ 
\eps^2 A^{(1)}_{42}(A^{(0)}_{21}=0) 
+ 
\eps^2 (A^{(2)}_{43}=0)A^{(0)}_{31} =0
\cr\cr
\mbox{and}&&
C_{42}(G;\eps,\eps') = 
\mathcal{A}_{42} (\emptyset) 
+ 
\mathcal{A}_{43} (\{3\})  = 
\eps A^{(1)}_{42}
+ 
\eps^2 (A^{(2)}_{43}=0)(A^{(1)}_{32} =0)= \eps A^{(1)}_{42}\,;
\cr\cr
\widehat\lambda_{5}
&=&
C_{51}(G;\eps,\eps')\lambda_1 
+ 
(C_{52}(G;\eps,\eps')=0)\lambda_2 
+
C_{53}(G;\eps,\eps')\lambda_3 
+ 
\eps (A^{(3)}_{54}=0)\lambda_4  
+ 
\lambda_5 \cr\cr
&=& 
(\eps + \eps^2)\lambda_1 
+
\eps  \lambda_3 
+ 
\lambda_5 \,,
\cr\cr
\mbox{where}&&
C_{51}(G;\eps,\eps') = 
\mathcal{A}_{51} (\emptyset) 
+  
\mathcal{A}_{51} (\{2,3,4\}) \crcr
&& 
+ 
\mathcal{A}_{51} (\{2\}) 
+
\mathcal{A}_{51} (\{3\}) +\mathcal{A}_{51} (\{4\})
 + 
\mathcal{A}_{51} (\{2,3\}) 
 + 
\mathcal{A}_{51} (\{2,4\}) 
+ 
\mathcal{A}_{51} (\{3,4\}) 
 \cr\cr
&& = \eps A^{(0)}_{51} + 
(\eps^{4}A^{(3)}_{54}...A^{(0)}_{21}=0)
+ \eps^2 (A^{(1)}_{52}A^{(0)}_{21}=0)
+ \eps^2 (A^{(2)}_{53}A^{(0)}_{31}=1)\cr\cr
&&
+ \eps^2 (A^{(4)}_{54}A^{(0)}_{41}=0) 
+ (\mathcal{A}_{51} (\{2,3\}) =0)
 + (\mathcal{A}_{51} (\{2,4\}) =0)
+ (\mathcal{A}_{51} (\{3,4\}) =0)
\cr\cr
&& =  \eps A^{(0)}_{51} + \eps^2 (A^{(2)}_{53}A^{(0)}_{31}=1)
 = \eps + \eps^2 \cr\cr 
\mbox{and}&&
C_{53}(G;\eps,\eps') = 
\mathcal{A}_{53} (\emptyset) 
+ 
(\mathcal{A}_{53} (\{4\}) =0) = \eps A^{(2)}_{53}  \,,
\eea
as
\bea
&& 
q^{2}\prod_{j=1}^{5}
 \gamma_{\begin{tikzpicture}
\draw (0,0) -- (1.5/4,0) -- (0.75/4,2/4) -- (0,0) ;
 \fill[fill=black] (0,0) circle (0.1/4);
 \fill[fill=black] (1.5/4,0) circle (0.1/4);
 \fill[fill=black] (0.75/4,2/4) circle (0.1/4);
\fill[fill=black] (3/4,1/4) circle (0.1/4);
\fill[fill=black] (4.5/4,1/4) circle (0.1/4);
 \draw[shorten <= 0.1cm, shorten >= 0.1cm] (3/4,1/4) to[out=50, in=130] (4.5/4,1/4);
\draw[shorten <= 0.1cm, shorten >= 0.1cm] (3/4,1/4) to[out=-50, in=-130] (4.5/4,1/4);
\end{tikzpicture}} 
(\widehat\lambda_{j} \big(\{\lambda_l\}_{l\leq j}) \big) 
 =\crcr
&&
 q^{2} \alpha(\lambda_1)\alpha(\lambda_2)
\alpha(\lambda_3 + \eps \lambda_1) 
\alpha(\lambda_4 + \eps \lambda_2)
\alpha(\lambda_5 + \eps \lambda_3 + \eps(1+\eps )\lambda_1)  
\cr\cr
&&
=\Big[q \alpha(\lambda_1)
\alpha(\lambda_3 + \eps \lambda_1) 
\alpha(\lambda_5 + \eps \lambda_3 + \eps(1+\eps )\lambda_1)  
\Big] 
\Big[ \alpha(\lambda_2)
\alpha(\lambda_4 + \eps \lambda_2)
\Big] 
\cr\cr
&&
= \big(q\prod_{j=1}^{5}
 \gamma_{\begin{tikzpicture}
\draw (0,0) -- (1.5/4,0) -- (0.75/4,2/4) -- (0,0) ;
 \fill[fill=black] (0,0) circle (0.1/4);
 \fill[fill=black] (1.5/4,0) circle (0.1/4);
 \fill[fill=black] (0.75/4,2/4) circle (0.1/4);
\end{tikzpicture}} \big)\big(q\prod_{j=1}^{5}
 \gamma_{\begin{tikzpicture}
\fill[fill=black] (3/4,1/4) circle (0.1/4);
\fill[fill=black] (4.5/4,1/4) circle (0.1/4);
 \draw[shorten <= 0.1cm, shorten >= 0.1cm] (3/4,1/4) to[out=50, in=130] (4.5/4,1/4);
\draw[shorten <= 0.1cm, shorten >= 0.1cm] (3/4,1/4) to[out=-50, in=-130] (4.5/4,1/4);
\end{tikzpicture}} \big)
\eea
where in the two last equalities, we emphasize the factorization
of this monomial between the contribution of two subgraphs 
of $G_1$ and $G_2$. 
Using the same technique, we list
\bea
&& 
q^{2}\prod_{j=1}^{5}
 \gamma_{\begin{tikzpicture}
\draw (0,0) -- (1.5/4,0) -- (0.75/4,2/4) -- (0,0) ;
 \fill[fill=black] (0,0) circle (0.1/4);
 \fill[fill=black] (1.5/4,0) circle (0.1/4);
 \fill[fill=black] (0.75/4,2/4) circle (0.1/4);
\fill[fill=black] (3/4,1/4) circle (0.1/4);
\fill[fill=black] (4.5/4,1/4) circle (0.1/4);
 \draw[shorten <= 0.1cm, shorten >= 0.1cm] (3/4,1/4) to[out=50, in=130] (4.5/4,1/4);
\draw[shorten <= 0.1cm, shorten >= 0.1cm] (3/4,1/4) to[out=-50, in=-130] (4.5/4,1/4);
\end{tikzpicture}} 
(\widehat\lambda_{j} \big(\{\lambda_l\}_{l\leq j}) \big) 
 + q^{2}\prod_{j=1}^{5}
 \gamma_{\begin{tikzpicture}
\draw (0,0) -- (1.5/4,0) -- (0.75/4,2/4) -- (0,0) ;
 \fill[fill=black] (0,0) circle (0.1/4);
 \fill[fill=black] (1.5/4,0) circle (0.1/4);
 \fill[fill=black] (0.75/4,2/4) circle (0.1/4);
\fill[fill=black] (3/4,1/4) circle (0.1/4);
\fill[fill=black] (4.5/4,1/4) circle (0.1/4);
\draw[shorten <= 0.1cm, shorten >= 0.1cm] (3/4,1/4) to[out=-50, in=-130] (4.5/4,1/4);
\end{tikzpicture}} \;\; 
(\widehat\lambda_{j} \big(\{\lambda_l\}_{l\leq j}) \big) \cr\cr
&&
+  
 q^{2}\prod_{j=1}^{5}
 \gamma_{\begin{tikzpicture}
\draw (0,0) -- (1.5/4,0) -- (0.75/4,2/4) -- (0,0) ;
 \fill[fill=black] (0,0) circle (0.1/4);
 \fill[fill=black] (1.5/4,0) circle (0.1/4);
 \fill[fill=black] (0.75/4,2/4) circle (0.1/4);
\fill[fill=black] (3/4,1/4) circle (0.1/4);
\fill[fill=black] (4.5/4,1/4) circle (0.1/4);
 \draw[shorten <= 0.1cm, shorten >= 0.1cm] (3/4,1/4) to[out=50, in=130] (4.5/4,1/4);
\end{tikzpicture}} 
(\widehat\lambda_{j} \big(\{\lambda_l\}_{l\leq j}) \big) 
 + q^{3}\prod_{j=1}^{5}
 \gamma_{\begin{tikzpicture}
\draw (0,0) -- (1.5/4,0) -- (0.75/4,2/4) -- (0,0) ;
 \fill[fill=black] (0,0) circle (0.1/4);
 \fill[fill=black] (1.5/4,0) circle (0.1/4);
 \fill[fill=black] (0.75/4,2/4) circle (0.1/4);
\fill[fill=black] (3/4,1/4) circle (0.1/4);
\fill[fill=black] (4.5/4,1/4) circle (0.1/4);
\end{tikzpicture}} \;\; 
(\widehat\lambda_{j} \big(\{\lambda_l\}_{l\leq j}) \big) 
\cr\cr\cr
&& = 
\Big[  q^{2} \alpha(\lambda_1)
\alpha(\lambda_3 + \eps \lambda_1)
\alpha(\lambda_5 + \eps \lambda_3 + \eps(1+\eps )\lambda_1)   \Big]
 \cr\cr
&&
\times \Big[\alpha(\lambda_2)
\alpha(\lambda_4 + \eps \lambda_2)
+
\beta(\lambda_2) 
\alpha(\lambda_4 + \eps' \lambda_2)
+
\alpha(\lambda_2)
\beta(\lambda_4 + \eps \lambda_2)
+
q \beta(\lambda_2)
\beta(\lambda_4 + \eps' \lambda_2)\Big] .
\eea
We can already read that the polynomial associated with 
the graph $G_2$ is given by
\bea
P^{\eps,\eps'}(G_2;q;\{\lambda_j\}_{j=2,4})
&=& q\Big[\alpha(\lambda_2)
\alpha(\lambda_4 + \eps \lambda_2)
+
\beta(\lambda_2) 
\alpha(\lambda_4 + \eps' \lambda_2) \crcr
&+&
\alpha(\lambda_2)
\beta(\lambda_4 + \eps \lambda_2)
+
q \beta(\lambda_2)
\beta(\lambda_4 + \eps' \lambda_2)\Big] . 
\eea
Now we can calculate the following terms
\bea
&&
 q^{2}\prod_{j=1}^{5}
 \gamma_{\begin{tikzpicture}
\draw (0,0) -- (1.5/4,0) -- (0.75/4,2/4) ;
 \fill[fill=black] (0,0) circle (0.1/4);
 \fill[fill=black] (1.5/4,0) circle (0.1/4);
 \fill[fill=black] (0.75/4,2/4) circle (0.1/4);
\fill[fill=black] (3/4,1/4) circle (0.1/4);
\fill[fill=black] (4.5/4,1/4) circle (0.1/4);
 \draw[shorten <= 0.1cm, shorten >= 0.1cm] (3/4,1/4) to[out=50, in=130] (4.5/4,1/4);
\draw[shorten <= 0.1cm, shorten >= 0.1cm] (3/4,1/4) to[out=-50, in=-130] (4.5/4,1/4);
\end{tikzpicture}} 
(\widehat\lambda_{j} \big(\{\lambda_l\}_{l\leq j}) \big) 
 + q^{2}\prod_{j=1}^{5}
 \gamma_{\begin{tikzpicture}
\draw (0,0) -- (1.5/4,0) -- (0.75/4,2/4);  
 \fill[fill=black] (0,0) circle (0.1/4);
 \fill[fill=black] (1.5/4,0) circle (0.1/4);
 \fill[fill=black] (0.75/4,2/4) circle (0.1/4);
\fill[fill=black] (3/4,1/4) circle (0.1/4);
\fill[fill=black] (4.5/4,1/4) circle (0.1/4);
\draw[shorten <= 0.1cm, shorten >= 0.1cm] (3/4,1/4) to[out=-50, in=-130] (4.5/4,1/4);
\end{tikzpicture}} \;\; 
(\widehat\lambda_{j} \big(\{\lambda_l\}_{l\leq j}) \big) \cr\cr
&&
+  
 q^{2}\prod_{j=1}^{5}
 \gamma_{\begin{tikzpicture}
\draw (0,0) -- (1.5/4,0) -- (0.75/4,2/4);
 \fill[fill=black] (0,0) circle (0.1/4);
 \fill[fill=black] (1.5/4,0) circle (0.1/4);
 \fill[fill=black] (0.75/4,2/4) circle (0.1/4);
\fill[fill=black] (3/4,1/4) circle (0.1/4);
\fill[fill=black] (4.5/4,1/4) circle (0.1/4);
 \draw[shorten <= 0.1cm, shorten >= 0.1cm] (3/4,1/4) to[out=50, in=130] (4.5/4,1/4);
\end{tikzpicture}} 
(\widehat\lambda_{j} \big(\{\lambda_l\}_{l\leq j}) \big) 
 + q^{3}\prod_{j=1}^{5}
 \gamma_{\begin{tikzpicture}
\draw (0,0) -- (1.5/4,0) -- (0.75/4,2/4); 
 \fill[fill=black] (0,0) circle (0.1/4);
 \fill[fill=black] (1.5/4,0) circle (0.1/4);
 \fill[fill=black] (0.75/4,2/4) circle (0.1/4);
\fill[fill=black] (3/4,1/4) circle (0.1/4);
\fill[fill=black] (4.5/4,1/4) circle (0.1/4);
\end{tikzpicture}} \;\; 
(\widehat\lambda_{j} \big(\{\lambda_l\}_{l\leq j}) \big) 
\cr\cr
&& 
+\Bigg\{ {\mbox{and the like obtained by replacing  \begin{tikzpicture}
\draw (0,0) -- (1.5/4,0) -- (0.75/4,2/4); 
 \fill[fill=black] (0,0) circle (0.1/4);
 \fill[fill=black] (1.5/4,0) circle (0.1/4);
 \fill[fill=black] (0.75/4,2/4) circle (0.1/4);
\end{tikzpicture} by  \begin{tikzpicture}
\draw (1.5/4,0) -- (0.75/4,2/4) -- (0,0) ;
 \fill[fill=black] (0,0) circle (0.1/4);
 \fill[fill=black] (1.5/4,0) circle (0.1/4);
 \fill[fill=black] (0.75/4,2/4) circle (0.1/4);
\end{tikzpicture} 
,  \begin{tikzpicture}
\draw (1.5/4,0) -- (0,0) --  (0.75/4,2/4);
 \fill[fill=black] (0,0) circle (0.1/4);
 \fill[fill=black] (1.5/4,0) circle (0.1/4);
 \fill[fill=black] (0.75/4,2/4) circle (0.1/4);
\end{tikzpicture}  }} \Bigg\}   \cr\cr
&&
= 
P^{\eps,\eps'}(G_2;q;\{\lambda_j\}_{j=2,4}) \Big[  q \beta(\lambda_1)
\alpha(\lambda_3 + \eps' \lambda_1)
\alpha(\lambda_5 + \eps \lambda_3 + \eps'(1+\eps )\lambda_1) \cr\cr
&&
+
 q \alpha(\lambda_1)
\alpha(\lambda_3 + \eps \lambda_1)
\beta(\lambda_5 + \eps \lambda_3 + \eps(1+\eps )\lambda_1)\cr\cr
&&
+
 q \alpha(\lambda_1)
\beta(\lambda_3 + \eps\lambda_1)
\alpha(\lambda_5 + \eps' \lambda_3 + \eps(1+\eps' )\lambda_1)\Big] .
\eea
We deal with the next type of terms of the form:
\bea
&&
 q^{3}\prod_{j=1}^{5}
 \gamma_{\begin{tikzpicture}
\draw (0,0) -- (1.5/4,0);  
 \fill[fill=black] (0,0) circle (0.1/4);
 \fill[fill=black] (1.5/4,0) circle (0.1/4);
 \fill[fill=black] (0.75/4,2/4) circle (0.1/4);
\fill[fill=black] (3/4,1/4) circle (0.1/4);
\fill[fill=black] (4.5/4,1/4) circle (0.1/4);
 \draw[shorten <= 0.1cm, shorten >= 0.1cm] (3/4,1/4) to[out=50, in=130] (4.5/4,1/4);
\draw[shorten <= 0.1cm, shorten >= 0.1cm] (3/4,1/4) to[out=-50, in=-130] (4.5/4,1/4);
\end{tikzpicture}} 
(\widehat\lambda_{j} \big(\{\lambda_l\}_{l\leq j}) \big) 
 + q^{3}\prod_{j=1}^{5}
 \gamma_{\begin{tikzpicture}
\draw (0,0) -- (1.5/4,0);  
 \fill[fill=black] (0,0) circle (0.1/4);
 \fill[fill=black] (1.5/4,0) circle (0.1/4);
 \fill[fill=black] (0.75/4,2/4) circle (0.1/4);
\fill[fill=black] (3/4,1/4) circle (0.1/4);
\fill[fill=black] (4.5/4,1/4) circle (0.1/4);
\draw[shorten <= 0.1cm, shorten >= 0.1cm] (3/4,1/4) to[out=-50, in=-130] (4.5/4,1/4);
\end{tikzpicture}} \;\; 
(\widehat\lambda_{j} \big(\{\lambda_l\}_{l\leq j}) \big) \cr\cr
&&
+ 
 q^{3}\prod_{j=1}^{5}
 \gamma_{\begin{tikzpicture}
\draw (0,0) -- (1.5/4,0);   
 \fill[fill=black] (0,0) circle (0.1/4);
 \fill[fill=black] (1.5/4,0) circle (0.1/4);
 \fill[fill=black] (0.75/4,2/4) circle (0.1/4);
\fill[fill=black] (3/4,1/4) circle (0.1/4);
\fill[fill=black] (4.5/4,1/4) circle (0.1/4);
 \draw[shorten <= 0.1cm, shorten >= 0.1cm] (3/4,1/4) to[out=50, in=130] (4.5/4,1/4);
\end{tikzpicture}} 
(\widehat\lambda_{j} \big(\{\lambda_l\}_{l\leq j}) \big) 
 + q^{4}\prod_{j=1}^{5}
 \gamma_{\begin{tikzpicture}
\draw (0,0) -- (1.5/4,0);  
 \fill[fill=black] (0,0) circle (0.1/4);
 \fill[fill=black] (1.5/4,0) circle (0.1/4);
 \fill[fill=black] (0.75/4,2/4) circle (0.1/4);
\fill[fill=black] (3/4,1/4) circle (0.1/4);
\fill[fill=black] (4.5/4,1/4) circle (0.1/4);
\end{tikzpicture}} \;\; 
(\widehat\lambda_{j} \big(\{\lambda_l\}_{l\leq j}) \big) 
\cr\cr
&& 
+ \Big\{ {\mbox{and the like obtained by replacing  \begin{tikzpicture}
\draw (0,0) -- (1.5/4,0);  
 \fill[fill=black] (0,0) circle (0.1/4);
 \fill[fill=black] (1.5/4,0) circle (0.1/4);
 \fill[fill=black] (0.75/4,2/4) circle (0.1/4);
\end{tikzpicture} by  \begin{tikzpicture}
\draw (0.75/4,2/4) -- (0,0) ;
 \fill[fill=black] (0,0) circle (0.1/4);
 \fill[fill=black] (1.5/4,0) circle (0.1/4);
 \fill[fill=black] (0.75/4,2/4) circle (0.1/4);
\end{tikzpicture} 
,  \begin{tikzpicture}
\draw (1.5/4,0) --  (0.75/4,2/4);
 \fill[fill=black] (0,0) circle (0.1/4);
 \fill[fill=black] (1.5/4,0) circle (0.1/4);
 \fill[fill=black] (0.75/4,2/4) circle (0.1/4);
\end{tikzpicture}  }} \Big\} \cr\cr
&& = 
P^{\eps,\eps'}(G_2;q;\{\lambda_j\}_{j=2,4}) \Big[  q^2 \beta(\lambda_1)
\beta(\lambda_3 + \eps' \lambda_1)
\alpha(\lambda_5 + \eps' \lambda_3 + \eps'(1+\eps' )\lambda_1) \cr\cr
&&
+
 q^2 \alpha(\lambda_1)
\beta(\lambda_3 + \eps \lambda_1)
\beta(\lambda_5 + \eps' \lambda_3 + \eps(1+\eps' )\lambda_1)\cr\cr
&&
+
 q^2 \beta(\lambda_1)
\alpha(\lambda_3 + \eps'\lambda_1)
\beta(\lambda_5 + \eps\lambda_3 + \eps'(1+\eps )\lambda_1)\Big] .
\eea
Finally, we evaluate 
\bea
&&
 q^{4}\prod_{j=1}^{5}
 \gamma_{\begin{tikzpicture}
 \fill[fill=black] (0,0) circle (0.1/4);
 \fill[fill=black] (1.5/4,0) circle (0.1/4);
 \fill[fill=black] (0.75/4,2/4) circle (0.1/4);
\fill[fill=black] (3/4,1/4) circle (0.1/4);
\fill[fill=black] (4.5/4,1/4) circle (0.1/4);
 \draw[shorten <= 0.1cm, shorten >= 0.1cm] (3/4,1/4) to[out=50, in=130] (4.5/4,1/4);
\draw[shorten <= 0.1cm, shorten >= 0.1cm] (3/4,1/4) to[out=-50, in=-130] (4.5/4,1/4);
\end{tikzpicture}} 
(\widehat\lambda_{j} \big(\{\lambda_l\}_{l\leq j}) \big) 
 + q^{4}\prod_{j=1}^{5}
 \gamma_{\begin{tikzpicture}
 \fill[fill=black] (0,0) circle (0.1/4);
 \fill[fill=black] (1.5/4,0) circle (0.1/4);
 \fill[fill=black] (0.75/4,2/4) circle (0.1/4);
\fill[fill=black] (3/4,1/4) circle (0.1/4);
\fill[fill=black] (4.5/4,1/4) circle (0.1/4);
\draw[shorten <= 0.1cm, shorten >= 0.1cm] (3/4,1/4) to[out=-50, in=-130] (4.5/4,1/4);
\end{tikzpicture}} \;\; 
(\widehat\lambda_{j} \big(\{\lambda_l\}_{l\leq j}) \big) \cr\cr
&&
+  
 q^{4}\prod_{j=1}^{5}
 \gamma_{\begin{tikzpicture}
 \fill[fill=black] (0,0) circle (0.1/4);
 \fill[fill=black] (1.5/4,0) circle (0.1/4);
 \fill[fill=black] (0.75/4,2/4) circle (0.1/4);
\fill[fill=black] (3/4,1/4) circle (0.1/4);
\fill[fill=black] (4.5/4,1/4) circle (0.1/4);
 \draw[shorten <= 0.1cm, shorten >= 0.1cm] (3/4,1/4) to[out=50, in=130] (4.5/4,1/4);
\end{tikzpicture}} 
(\widehat\lambda_{j} \big(\{\lambda_l\}_{l\leq j}) \big) 
 + q^{5}\prod_{j=1}^{5}
 \gamma_{\begin{tikzpicture}
 \fill[fill=black] (0,0) circle (0.1/4);
 \fill[fill=black] (1.5/4,0) circle (0.1/4);
 \fill[fill=black] (0.75/4,2/4) circle (0.1/4);
\fill[fill=black] (3/4,1/4) circle (0.1/4);
\fill[fill=black] (4.5/4,1/4) circle (0.1/4);
\end{tikzpicture}} \;\; 
(\widehat\lambda_{j} \big(\{\lambda_l\}_{l\leq j}) \big) 
\cr\cr
&& 
P^{\eps,\eps'}(G_2;q;\{\lambda_j\}_{j=2,4}) \Big[  q^3 \beta(\lambda_1)
\beta(\lambda_3 + \eps' \lambda_1)
\beta(\lambda_5 + \eps' \lambda_3 + \eps'(1+\eps' )\lambda_1)\Big].
\eea
Adding all contributions, one finds the ordering-dependent
Tutte polynomial associated with the graph $G_1 \cup G_2$,
performing a sequence of contractions and deletions as $(1,2,3,4,5)$. 

In a final comment, we want to address here the special case $\eps=\eps'=1$
and $\lambda_j = \lambda$ in view of possible interesting deformation of 
the ordinary Tutte polynomial. Consider the above example on $G_2$ and $G_1$, setting for simplicity 
$\beta=1$, and therefore we write 
\bea
P^{1,1}(G_2;q;\lambda)
&=& q\Big[\alpha(\lambda)
\alpha(2\lambda)
+
\alpha(2\lambda) +
\alpha(\lambda)
+
q \Big]  \crcr
&=& \sum_{B \subset G_2} q^{k(B)}
\prod_{e\in B}\alpha(\lambda^B_e)\,, 
\eea  
where $\lambda_e^B = c^B_e \lambda$ and  $c_e^B$ is  
a positive integer. Observe that, for this example,  $c_e^B$ becomes independent of $B$ and so we reduce to the multivariate version of the Tutte polynomial \cite{Sokal}  (to be explicit, we can redefine the edge labelling such that the polynomial $P^{1,1}$ coincides with multivariate Tutte polynomial).
 This is generally true  
when the adjacency matrix of the line graph is invariant under permutation of the edges. The case of clique graphs is a particular
instance for which this is realized.
Thus, for this type of graphs, we naturally loose the dependence on the order in which we perform the sequence of contraction deletion. 
We can check if this is also the case for the graph $G_1$ 
and find,
\bea
P^{1,1}(G_1;q;\lambda)
 &=&   q^{2} \alpha(\lambda)
\alpha(2\lambda)
\alpha(4\lambda)   
+ 
 q 
\alpha(2\lambda)
\alpha(4\lambda) 
+
 q \alpha(\lambda)
\alpha(2\lambda)
\cr\cr
&+&
 q \alpha(\lambda)
\alpha(4\lambda) 
+
  q^2
\alpha(4\lambda) 
+
 q^2 \alpha(\lambda)
+
 q^2 
\alpha(2\lambda)
 +
 q^3 \cr\cr
&=& \sum_{B \subset G_1} q^{k(B)}
\prod_{e\in B}\alpha(\lambda_e),
\eea
confirming our expectations. In conclusion, in this particular limit, and
for particular graphs, the order-independence of the generalized Tutte polynomial 
can be effectively restored.

\end{document}